\documentclass[a4paper,10pt]{llncs}

\pagestyle{plain}

\usepackage{\jobname}

\usepackage{bussproofs}

\title{Axiomatizing Propositional Dependence Logics}

\author{Katsuhiko Sano\inst{1} \and Jonni~Virtema\inst{1,2}}

\institute{
Japan Advanced Institute of Science and Technology, Japan
\and
University of Tampere, Finland\\
\email{\{katsuhiko.sano, jonni.virtema\}@gmail.com}}

\begin{document}

\maketitle

\begin{abstract}
We give sound and complete Hilbert-style axiomatizations for propositional dependence logic ($\PD$), modal dependence logic ($\MDL$), and extended modal dependence logic ($\EMDL$) by extending existing axiomatizations for propositional logic and modal logic. In addition, we give novel labeled tableau calculi for $\PD$, $\MDL$, and $\EMDL$. We prove soundness, completeness and termination for each of the labeled calculi. 
\end{abstract}

\section{Introduction}
Functional dependences occur everywhere in science, e.g., in descriptions of discrete systems, in database theory, social choice theory, mathematics, and physics. Modal logic is an important formalism utilized in the research of numerous disciplines including many of the fields mentioned above. 
With the aim to express functional dependences in the framework of modal logic, V\"a\"an\"anen \cite{va08} introduced \emph{modal dependence logic} ($\MDL$). Modal dependence logic extends modal logic with \emph{propositional dependence atoms}. A dependence atom, denoted by $\dep{p_1,\dots,p_n,q}$, intuitively states that the truth value of the proposition $q$ is functionally determined by the truth values of the propositions $p_1,\dots,p_{n}$. It was soon realized that $\MDL$ lacks the ability to express temporal dependencies; there is no mechanism in $\MDL$ to express dependencies that occur between different points of the model. This is due to the restriction that only proposition symbols are allowed in the dependence atoms of modal dependence logic. To overcome this defect Ebbing et~al. \cite{EHMMVV13} introduced the
\emph{extended modal dependence logic} ($\EMDL$) by extending the scope of dependence atoms to arbitrary modal formulae, i.e., dependence atoms in extended modal dependence logic are of the form $\dep{\varphi_1,\dots\varphi_n,\psi}$, where $\varphi_1,\dots,\varphi_n,\psi$ are formulae of modal logic.

In recent years the research around modal dependence logic has been very active. The focus has been in the computational complexity and expressive power of related formalisms. Hella et~al. \cite{HeLuSaVi14} established that exactly the properties of teams that are downward closed and closed under the so-called team $k$-bisimulation, for some finite $k$, are definable in extended modal dependence logic. This characterization truly demonstrates the naturality of $\EMDL$. For recent research related to computational complexity of modal dependence logics see, e.g., \cite{EHMMVV13,ebloya,KMSV14,lohvo13,vollmer13,virtema14}. E.g., in \cite{virtema14} it was shown that the validity problem for both $\MDL$ and $\EMDL$ is $\NEXPTIME$-hard and contained in $\NEXPTIME^\NP$. Furthermore, it was shown that the corresponding problem for the propositional fragment $\PD$ of $\MDL$ is $\NEXPTIME$-complete (for the definition of $\PD$ see Section \ref{defprop}). 

In this paper we give sound and complete axiomatizations for variants of propositional and modal dependence logics.
We give Hilbert-style axiomatizations for these logics by extending existing axiomatizations for propositional logic and modal logic. In addition, we give novel labeled tableau calculi for these logics. 
This paper is one of the first articles on proof theory of propositional and modal dependence logics. The only other work known by the authors of this article is the PhD thesis of Fan Yang \cite{fanthesis}. Among other things, in her thesis, Yang presents axiomatizations of variants of propositional dependence logic based on natural deduction. The axiomatizations of Yang are however quite complicated and do not cover variants of modal dependence logic.

The article is structured as follows. In Section \ref{preli} we introduce the required notions and definitions. In Section \ref{axiomatization1} we give Hilbert-style axiomatizations for propositional and modal dependence logics. In Section \ref{axiomatization2} we present labeled tableau calculi for these logics.

\section{Preliminaries}\label{preli}

%
The syntax of propositional logic ($\PL$) and modal logic ($\ML$) could be defined in any standard way. However, when we consider 
the extensions of $\PL$ and $\ML$ by dependence atoms, it is useful to assume that all formulas are
in \emph{negation normal form}, i.e., negations occur only in front of atomic propositions.
Thus we will define the syntax of $\PL$ and $\ML$ in negation normal form. When $\varphi$ is a formula of $\PL$ or $\ML$, we denote by $\varphi^\bot$ the equivalent formula that is obtained from $\neg \varphi$ by pushing all negations to the atomic level. Furthermore, we define $\varphi^\top\dfn \varphi$.
When $\vec{a}$ is a tuple of symbols of length $k$, we denote by $a_j$ the $j$th element of $\vec{a}$, $j\leq k$. 
When $\varphi$ is a formula, $\lvert \varphi \rvert$ denotes the number of symbols in $\varphi$ excluding negations and brackets.
When $A$ is a set $\lvert A \rvert$ denotes the number of elements in $A$. When $f:A\to B$ is a function and $C\subseteq A$, we define $f[C] \dfn \{ f(a) \mid a\in C \}$.

\subsection{Propositional logic with team semantics}\label{defprop}
Let 
$\mathsf{PROP}$ = $\{z_i\mid i\in \mathbb{N} \}$
denote the set of exactly all \emph{propositional variables}, i.e., \emph{proposition symbols}. We mainly use metavariables $p,q,p_1,p_2,q_1,q_2$, etc., in order to refer to variable symbols in $\mathsf{PROP}$. 
Let $D$ be a finite, possibly empty, subset of $\mathsf{PROP}$. A function $s:D\to \{0,1\}$ is called an \emph{assignment}. A set $X$ of assignments $s:D\to \{0,1\}$ is called a \emph{propositional team}. The set $D$ is the \emph{domain} of $X$. Note that the empty team $\emptyset$ does not have a unique domain; any subset of $\mathsf{PROP}$ is a domain of the empty team. By $\{0,1\}^D$, we denote the set of all assignments $s:D\to\{0,1\}$.

Let $\Phi$ be a set of proposition symbols. The syntax for propositional logic $\PL(\Phi)$ is defined as follows:
\[
\varphi \ddfn p\mid \neg p \mid (\varphi \wedge \varphi) \mid (\varphi \vee \varphi), \quad \text{where $p\in\Phi$.}
\]
We will now give the team semantics for propositional logic. 
As we will see below, the team semantics and the ordinary semantics for propositional logic defined via assignments, in a rather strong sense, coincide.
\begin{definition}
Let $\Phi$ be a set of atomic propositions and let $X$ be a propositional team. The satisfaction relation $X\models \varphi$ for $\PL(\Phi)$ is defined as follows. Note that, we always assume that the proposition symbols that occur in $\varphi$ are also in the domain of $X$.
\begin{align*}
X\models p  \quad\Leftrightarrow\quad& \forall s\in X: s(p)=1.\\
X\models \neg p \quad\Leftrightarrow\quad& \forall s\in X: s(p)=0.\\
X\models (\varphi\land\psi) \quad\Leftrightarrow\quad& X\models\varphi \text{ and } X\models\psi.\\
X\models (\varphi\lor\psi) \quad\Leftrightarrow\quad& Y\models\varphi \text{ and } 
Z\models\psi,
\text{ for some $Y,Z$ such that $Y\cup Z= X$}.
\end{align*}
\end{definition}
\begin{proposition}[\cite{Sevenster:2009}]\label{PLflat}
Let $\varphi$ be a formula of propositional logic and let $X$ be a propositional team. Let $\models_\mathcal{PL}$ denote the ordinary satisfaction relation of propositional logic defined via assignments. Then 
$X\models \varphi\,\Leftrightarrow\,\forall s\in X: s\models_\mathcal{PL} \varphi$,
and especially
$\{s\}\models \varphi \,\Leftrightarrow\,  s\models_\mathcal{PL} \varphi$.
\end{proposition}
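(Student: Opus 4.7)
The plan is to prove the proposition by structural induction on the formula $\varphi$, since both the team semantics and the standard semantics are defined recursively on $\varphi$. Observe that the ``especially'' clause follows immediately from the main statement by instantiating $X \dfn \{s\}$, so I would focus entirely on the biconditional for arbitrary $X$.

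For the base cases $\varphi = p$ and $\varphi = \neg p$, the team semantics clauses are literally defined as universal quantifications over $X$ of the corresponding classical conditions $s(p) = 1$ and $s(p) = 0$, so the equivalence is immediate (and holds vacuously when $X = \emptyset$). The conjunction case is also routine: $X \models \psi \wedge \chi$ unfolds to $X \models \psi$ and $X \models \chi$, which by two applications of the induction hypothesis becomes ``$\forall s\in X: s \models_\mathcal{PL} \psi$ and $\forall s \in X: s \models_\mathcal{PL} \chi$'', and this is classically equivalent to $\forall s \in X: s \models_\mathcal{PL} \psi \wedge \chi$.

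The only case requiring genuine thought is the disjunction. For the left-to-right direction, assuming $X \models \psi \vee \chi$, I would fix witnesses $Y, Z$ with $Y \cup Z = X$, $Y \models \psi$, and $Z \models \chi$; by the induction hypothesis every $s \in Y$ classically satisfies $\psi$ and every $s \in Z$ classically satisfies $\chi$, so every $s \in X$ satisfies at least one disjunct. For the harder right-to-left direction, assuming $\forall s \in X : s \models_\mathcal{PL} \psi \vee \chi$, the key move is to define the splitting explicitly by
\[
Y \dfn \{s \in X \mid s \models_\mathcal{PL} \psi\} \quad\text{and}\quad Z \dfn \{s \in X \mid s \models_\mathcal{PL} \chi\},
\]
so that $Y \cup Z = X$ by hypothesis. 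The induction hypothesis applied to $\psi$ and $\chi$ (on the subteams $Y$ and $Z$ respectively) then yields $Y \models \psi$ and $Z \models \chi$, giving $X \models \psi \vee \chi$.

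The only mildly subtle point is making sure the argument is uniform for the empty team: when $X = \emptyset$ the sets $Y$ and $Z$ are also empty, and one should verify that $\emptyset \models \varphi$ for every formula $\varphi$ as a consequence of the induction itself (which it is, since the universal conditions in the atomic clauses are vacuously true). Apart from this, no obstacle arises; the proof is essentially a careful bookkeeping exercise exploiting the fact that $\PL$ contains no dependence atoms, so no genuinely team-sensitive behaviour appears.
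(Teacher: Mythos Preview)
Your induction argument is correct and is the standard proof of this flatness result. Note, however, that the paper does not actually prove this proposition: it is quoted as a known result from Sevenster~\cite{Sevenster:2009} and stated without proof, so there is no in-paper argument to compare against. Your write-up would serve perfectly well as the omitted proof; the disjunction case is indeed the only place where a choice must be made, and your explicit splitting $Y=\{s\in X\mid s\models_{\mathcal{PL}}\psi\}$, $Z=\{s\in X\mid s\models_{\mathcal{PL}}\chi\}$ is exactly the canonical one.
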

The syntax of \emph{propositional logic with intuitionistic disjunction} $\PL(\idis)$ is obtained by extending the syntax of $\PL(\Phi)$ by the grammar rule
\(
\varphi \ddfn (\varphi\idis\varphi),
\)
whereas the syntax of \emph{propositional dependence logic} $\PD(\Phi)$ is obtained by extending the syntax of $\PL(\Phi)$ by the grammar rule
$\varphi \ddfn \dep{p_1,\dots,p_n,q}$, where $p_1,\dots,p_n,q\in\Phi$.
The intuitive meaning of the \emph{propositional dependence atom} $\dep{p_1,\dots,p_n,q}$ is that the truth value of the proposition symbol $q$ solely depends on the truth values of the proposition symbols $p_1,\dots,p_n$. The semantics for the intuitionistic disjunction and the propositional dependence atom is defined as follows:
\begin{align*}
X\models (\varphi\idis\psi) \quad\Leftrightarrow\quad& X\models \varphi \text{ or } X\models \psi\\
X\models \dep{p_1,\dots,p_n,q} \quad\Leftrightarrow\quad& \forall s,t\in X: s(p_1)=t(p_1), \dots, s(p_n)=t(p_n)\\
&\text{implies that } s(q)=t(q).
\end{align*}
The next proposition is very useful. The proof is very easy and the result is stated, for example, in \cite{fanthesis}.
\begin{proposition}[Downwards closure]\label{dcprop}
Let $\varphi$ be a formula of $\PL(\idis)$ or $\PD$ and let $Y\subseteq X$ be propositional teams. Then 
$X\models \varphi$ implies $Y\models \varphi$.
\end{proposition}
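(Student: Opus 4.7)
The plan is to prove the statement by structural induction on $\varphi$, handling $\PL(\idis)$ and $\PD$ simultaneously since the only new connectives/atoms beyond $\PL$ are $\idis$ and the dependence atom, both of which admit straightforward inductive arguments. Throughout, I fix $Y \subseteq X$ and assume $X \models \varphi$.

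For the base cases ($p$, $\neg p$, and $\dep{p_1,\dots,p_n,q}$), the semantic clauses are universally quantified over assignments in the team, so the conclusion for $Y$ follows immediately from restricting the quantifier to the subset $Y \subseteq X$. For the conjunction case, the clause is componentwise, and two applications of the induction hypothesis close the argument. For the intuitionistic disjunction, $X \models \varphi \idis \psi$ yields $X \models \varphi$ or $X \models \psi$, and the induction hypothesis transfers this to $Y$ directly.

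The only mildly delicate case is the tensor disjunction $\varphi \vee \psi$, because here we must exhibit a witnessing split of $Y$, not merely inherit one. Given $X = X_1 \cup X_2$ with $X_1 \models \varphi$ and $X_2 \models \psi$, I would define $Y_1 \dfn X_1 \cap Y$ and $Y_2 \dfn X_2 \cap Y$. Then $Y_1 \cup Y_2 = (X_1 \cup X_2) \cap Y = X \cap Y = Y$, and since $Y_1 \subseteq X_1$ and $Y_2 \subseteq X_2$, the induction hypothesis applied to the smaller formulas $\varphi$ and $\psi$ gives $Y_1 \models \varphi$ and $Y_2 \models \psi$, hence $Y \models \varphi \vee \psi$.

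I do not expect any real obstacle: the tensor-disjunction case is the only one where an explicit witness must be constructed, and the intersection-with-$Y$ trick is the canonical choice. One minor bookkeeping remark is that the empty team has no unique domain, but since we only ever restrict assignments by intersection and never change the domain of the assignments themselves, the domain condition (that all proposition symbols of $\varphi$ lie in the domain of the team) is preserved from $X$ to $Y$ without comment.
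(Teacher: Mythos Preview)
Your proof is correct: the structural induction with the intersection-split for $\lor$ is the standard argument, and every case is handled properly. The paper itself does not give a proof of this proposition; it simply remarks that the proof is very easy and refers to \cite{fanthesis}, so there is no alternative approach to compare against.
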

Note that, by downwards closure, $X\models (\varphi\lor\psi)$ iff $Y \models \varphi$ and $X \setminus Y \models \psi$ for some $Y \subseteq X$.

\subsection{Modal logics}
In this article, in order to keep the notation light, we restrict our attention to mono-modal logic, i.e., to modal logic with just two modal operators ($\Diamond$ and $\Box$). However this is not really a restriction, since the definitions, results, and proofs of this article generalize, in a straightforward manner, to handle also the poly-modal case.

Let $\Phi$ be a set of atomic propositions. The set of formulae for \emph{standard mono-modal logic} $\ML(\Phi)$ is generated by the following grammar
\[
\varphi \ddfn p\mid \neg p \mid (\varphi \wedge \varphi) \mid (\varphi \vee \varphi) \mid \Diamond \varphi \mid \Box \varphi, \quad\text{where $p\in\Phi$.}
\]
Note that, since negations are allowed only in front of proposition symbols, $\Box$ and $\Diamond$ are \emph{not} interdefinable.
The syntax of \emph{modal logic with intuitionistic disjunction} $\ML(\varovee)(\Phi)$ is obtained by extending the syntax of $\ML(\Phi)$ by the grammar rule
\(
\varphi \ddfn (\varphi\varovee\varphi).
\)
The \emph{team semantics for modal logic} is defined via \emph{Kripke models} and \emph{teams}. In the context of modal logic, teams are subsets of the domain of the model.
\begin{definition}
Let $\Phi$ be a set of atomic proposition symbols. A \emph{Kripke model} $\mathrm{K}$ over $\Phi$ is a tuple $\mathrm{K} = (W, R, V)$, where $W$ is a nonempty set of \emph{worlds}, $R\subseteq W\times W$ is a binary relation, and $V\colon \Phi \to \mathcal{P}(W)$ is a \emph{valuation}. A subset $T$ of $W$ is called a \emph{team} of $\mathrm{K}$. Furthermore, define that
\[
R[T] := \{w\in W \mid \exists v\in T \text{ s.t. } vRw \}, \,\,\,
R^{-1}[T] := \{w\in W \mid \exists v\in T \text{ s.t. }  wRv\}.
\]
For teams $T,S\subseteq W$, we write $T[R]S$ if $S\subseteq R[T]$ and $T\subseteq R^{-1}[S]$. Thus, $T[R]S$ holds if and only if for every $w\in T$ there exists some $v\in S$ such that $wRv$, and for every $v\in S$ there exists some $w\in T$ such that $wRv$.
\end{definition}
We are now ready to define the team semantics for modal logic and modal logic with intuitionistic disjunction. Similar to the case of propositional logic, the team semantics of modal logic, in a rather strong sense, coincides with the traditional semantics of modal logic defined via pointed Kripke models.
\begin{definition}
Let $\Phi$ be a set of atomic propositions, $\mathrm{K}$ a Kripke model and $T$ a team of $\mathrm{K}$. The satisfaction relation $\mathrm{K},T\models \varphi$ for $\ML(\Phi)$ is defined as follows. 
\begin{align*}
\mathrm{K},T\models p  \quad\Leftrightarrow\quad& w\in V(p) \, \text{ for every $w\in T$.}\\
\mathrm{K},T\models \neg p \quad\Leftrightarrow\quad& w\not\in V(p) \, \text{ for every $w\in T$.}\\
\mathrm{K},T\models (\varphi\land\psi) \quad\Leftrightarrow\quad& \mathrm{K},T\models\varphi \text{ and } K,T\models\psi.\\
\mathrm{K},T\models (\varphi\lor\psi) \quad\Leftrightarrow\quad& \mathrm{K},T_1\models\varphi \text{ and } 
\mathrm{K},T_2\models\psi \, \text{ for some $T_1$ and $T_2$}\\
&\text{such that $T_1\cup T_2= T$}.\\
\mathrm{K},T\models \Diamond\varphi \quad\Leftrightarrow\quad& \mathrm{K},T'\models\varphi \text{ for some $T'$ such that $T[R]T'$}.\\
\mathrm{K},T\models \Box\varphi \quad\Leftrightarrow\quad& \mathrm{K},T'\models\varphi, \text{ where $T'=R[T]$}.\\
\intertext{For $\ML(\varovee)$ we have the following additional clause:}
\mathrm{K},T\models (\varphi\varovee\psi) \quad\Leftrightarrow\quad& \mathrm{K},T\models\varphi \text{ or } \mathrm{K},T\models\psi.
\end{align*}
\end{definition}
\begin{proposition}[\cite{Sevenster:2009}]
Let $\varphi\in\ML$, $\mathrm{K}$ be a Kripke model and $T$ a team of $\mathrm{K}$. Let $\models_{\ML}$ denote the ordinary satisfaction relation of modal logic defined via pointed Kripke models. Then
\(
\mathrm{K},T\models \varphi \,\Leftrightarrow\, \forall w\in T:\mathrm{K},w\models_{\ML} \varphi
\)
and especially
\(
\mathrm{K},\{w\}\models \varphi \,\Leftrightarrow\, \mathrm{K},w\models_{\ML} \varphi.
\)
\end{proposition}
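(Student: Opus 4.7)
The plan is to prove the main equivalence $\mathrm{K},T\models\varphi \Leftrightarrow \forall w\in T:\mathrm{K},w\models_{\ML}\varphi$ by induction on the structure of $\varphi\in\ML$; the second, singleton claim will then follow as the special case $T=\{w\}$. The atomic cases $p$ and $\neg p$ are immediate from the definition of the team satisfaction relation, which is already phrased as a universal quantification over the worlds of $T$. The conjunction case is a direct application of the induction hypothesis.

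For the disjunction $\varphi\lor\psi$, in the left-to-right direction I would take a split $T=T_1\cup T_2$ witnessing $T\models\varphi\lor\psi$, note that every $w\in T$ lies in $T_1$ or $T_2$, and apply the induction hypothesis to each. In the right-to-left direction I would produce a canonical split: let $T_1=\{w\in T\mid\mathrm{K},w\models_{\ML}\varphi\}$ and $T_2=T\setminus T_1$; by the hypothesis every $w\in T_2$ satisfies $\psi$ classically, so by the induction hypothesis $\mathrm{K},T_i$ satisfies the corresponding disjunct. Here one must tacitly use the empty team property (the empty team satisfies every modal formula under team semantics, matching the vacuous classical quantification), which follows transparently from the clauses.

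The modal cases are the most interesting. For $\Box\varphi$, the equivalence chains cleanly: $\mathrm{K},T\models\Box\varphi$ iff $\mathrm{K},R[T]\models\varphi$ iff (by IH) every $v\in R[T]$ satisfies $\varphi$ classically iff every $w\in T$ has all its $R$-successors satisfying $\varphi$ iff $\mathrm{K},w\models_{\ML}\Box\varphi$ for all $w\in T$. For $\Diamond\varphi$, the left-to-right direction uses the witness $T'$ with $T[R]T'$: every $w\in T$ has some $R$-successor in $T'$, and by the induction hypothesis that successor classically satisfies $\varphi$. The right-to-left direction requires assembling a witness: for each $w\in T$ choose (by the classical hypothesis) some $v_w$ with $wRv_w$ and $\mathrm{K},v_w\models_{\ML}\varphi$, and set $T':=\{v_w\mid w\in T\}$; one checks $T[R]T'$ holds by construction, and $\mathrm{K},T'\models\varphi$ by the induction hypothesis.

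The only mild obstacle is the existential witness selection in the $\Diamond$-case, which is a mere choice (no appeal to Choice in a strong sense is needed if one allows any selection function on the nonempty sets of successors). Empty-team subtleties are handled uniformly by the vacuous reading of both sides. Once the main biconditional is established, setting $T=\{w\}$ yields the singleton corollary immediately.
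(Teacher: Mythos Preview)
The paper does not supply its own proof of this proposition; it is quoted from Sevenster~\cite{Sevenster:2009} without argument. Your structural induction is correct and is exactly the standard route to establishing flatness of $\ML$ under team semantics: the atomic and Boolean cases are immediate, the $\Box$ case reduces to the induction hypothesis applied to $R[T]$, and the $\Diamond$ case is handled by assembling a successor team from pointwise witnesses. Your remarks on the empty team and on the harmless choice in the $\Diamond$ step are accurate. There is nothing to compare against in the present paper.
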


The syntax for \emph{modal dependence logic} $\MDL(\Phi)$ is obtained by extending the syntax of $\ML(\Phi)$ by propositional dependence atoms
\(
\varphi\ddfn\dep{p_1,\dots, p_n,q},
\)
where $p_1,\dots,p_n,q\in \Phi$, whereas the syntax for \emph{extended modal dependence logic} $\EMDL(\Phi)$ is obtained by extending the syntax of $\ML(\Phi)$ by \emph{modal dependence atoms}
\(
\varphi\ddfn\dep{\varphi_1,\dots, \varphi_n,\psi},
\)
where $\varphi_1,\dots,\varphi_n,\psi\in\ML(\Phi)$. 
The intuitive meaning of the modal dependence atom $\dep{\varphi_1,\dots, \varphi_n,\psi}$ is that the truth value of the formula $\psi$ is completely determined by the truth values $\varphi_1,\dots, \varphi_n$.
The semantics for these dependence atoms is defined as follows.
\begin{align*}
\mathrm{K},T\models \dep{\varphi_1,\dots,\varphi_n,\psi} \quad\Leftrightarrow\quad& \forall w,v\in T: \bigwedge_{1 \leq i \leq n}(\mathrm{K},\{w\}\models\varphi_i \Leftrightarrow \mathrm{K},\{v\}\models\varphi_i)\\
& \text{implies }(\mathrm{K},\{w\}\models\psi\Leftrightarrow \mathrm{K},\{v\}\models\psi).
\end{align*}

The following result for $\MDL$ and $\ML(\varovee)$ is due to \cite{va08} and \cite{ebloya}, respectively. For $\EMDL$ it follows via a translation from $\EMDL$ into $\ML(\varovee)$, see \cite{EHMMVV13}.
\begin{proposition}[Downwards closure]\label{dcml}
Let $\varphi$ be a formula of $\ML(\varovee)$ or $\EMDL$, let $\mathrm{K}$ be a Kripke model and let $S\subseteq T$ be teams of $\mathrm{K}$. Then $\mathrm{K},T\models \varphi$ implies $\mathrm{K},S\models \varphi$.
\end{proposition}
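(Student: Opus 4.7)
The plan is to proceed by induction on the structure of $\varphi$, simultaneously for $\ML(\varovee)$ and $\EMDL$. Fix a Kripke model $\mathrm{K}=(W,R,V)$ and teams $S\subseteq T$ of $\mathrm{K}$ with $\mathrm{K},T\models \varphi$; we show $\mathrm{K},S\models \varphi$.

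For the atomic cases $p$ and $\neg p$, the condition is universally quantified over worlds in the team, so it is immediately preserved under passing to a subteam. For a dependence atom $\dep{\varphi_1,\dots,\varphi_n,\psi}$ (in the $\EMDL$ case), the semantics is a universally quantified condition over pairs of worlds in the team, and restricting to $S\times S\subseteq T\times T$ leaves the condition intact. The case of $\wedge$ is a direct application of two induction hypotheses. For the intuitionistic disjunction $\varovee$ (in the $\ML(\varovee)$ case), whichever disjunct $T$ satisfies is, by IH, also satisfied by $S$. For the tensor disjunction $\vee$, given a splitting $T=T_1\cup T_2$ with $\mathrm{K},T_i\models \varphi_i$, one takes $S_i \dfn S\cap T_i$; then $S_1\cup S_2 = S$ and $S_i\subseteq T_i$, so the IH gives $\mathrm{K},S_i\models\varphi_i$.

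The two modal cases require a bit more care but are straightforward. For $\Box\varphi$, the semantics says $\mathrm{K},R[T]\models\varphi$, and from $S\subseteq T$ we get $R[S]\subseteq R[T]$, so the IH yields $\mathrm{K},R[S]\models\varphi$, i.e., $\mathrm{K},S\models \Box\varphi$. For $\Diamond\varphi$, we have some $T'$ with $T[R]T'$ and $\mathrm{K},T'\models\varphi$. For each $w\in S$, the relation $T[R]T'$ supplies at least one $v\in T'$ with $wRv$; pick one such $v$ per $w$ (an application of the axiom of choice, or simply a successor-selection) and let $S'$ be the set of chosen worlds. Then $S'\subseteq T'$, so by IH $\mathrm{K},S'\models \varphi$; and by construction $S[R]S'$ holds, since every $w\in S$ has its chosen successor in $S'$, and every element of $S'$ is the chosen successor of some $w\in S$. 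Hence $\mathrm{K},S\models \Diamond\varphi$.

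The only case that requires any genuine construction rather than mere restriction is $\Diamond$, where one must actually build a witnessing subteam $S'$ of $T'$; all other cases boil down to noting that the defining conditions are inherited by subsets. Since the proposition is already known (via a different route) for $\ML(\varovee)$ from \cite{ebloya} and for $\EMDL$ via the translation into $\ML(\varovee)$ given in \cite{EHMMVV13}, the above inductive argument is essentially a direct verification.
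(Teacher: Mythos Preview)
Your inductive argument is correct. Each case is handled properly; in particular, the $\Diamond$ case works as written, though you could avoid the appeal to choice by simply taking $S' \dfn R[S]\cap T'$, which is easily seen to satisfy $S[R]S'$ and $S'\subseteq T'$.

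The paper, however, does not prove this proposition at all: it merely attributes the $\ML(\varovee)$ case to \cite{ebloya} (and the $\MDL$ case to \cite{va08}) and obtains the $\EMDL$ case indirectly, via the compositional translation of $\EMDL$ into $\ML(\varovee)$ from \cite{EHMMVV13}. So your approach is genuinely different from the paper's: you give a self-contained structural induction that treats the dependence atom directly as a universally quantified (hence downward-closed) condition, whereas the paper's route for $\EMDL$ relies on first eliminating dependence atoms in favour of $\varovee$ and then invoking the known result for $\ML(\varovee)$. Your argument is more elementary and does not depend on the translation machinery; the paper's route has the advantage of reusing the translation that is needed elsewhere anyway, but at the cost of an unnecessary detour for what is, as you observe, a one-line verification per connective.
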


\subsection{Equivalence and validity in team semantics}
We say that formulas  $\varphi$ and $\psi$ of $\PL(\idis)(\Phi)$ or $\PD(\Phi)$ are \emph{equivalent} and write $\varphi\equiv \psi$, if the equivalence
\(
X\models \varphi \,\Leftrightarrow\, X\models \psi
\)
holds for every propositional team $X$ of some finite domain $D\subseteq\Phi$. Likewise, we say that formulas  $\varphi$ and $\psi$ of $\ML(\idis)(\Phi)$ or $\EMDL(\Phi)$ are \emph{equivalent} and write $\varphi\equiv \psi$, if the equivalence
\(
K,T\models \varphi \,\Leftrightarrow\, K,T\models \psi
\)
holds for every Kripke model $K$ and team $T$ of $K$.

A formula $\varphi$ of $\PL(\idis)(\Phi)$ or $\PD(\Phi)$ is said to be \emph{valid}, if $X\models\varphi$ holds for all teams $X$ of some finite domain $D\subseteq\Phi$.
Analogously, a formula $\psi$ of $\EMDL(\Phi)$ or $\ML(\varovee)(\Phi)$ is 
said to be \emph{valid}, if $\mathrm{K},T\models\psi$ holds for every Kripke model $\mathrm{K}$ and every team $T$ of $\mathrm{K}$. When $\varphi$ is a valid formula of $\mathcal{L}$, we write $\models_{\mathcal{L}}\varphi$. 

The following proposition shown in \cite{virtema14} will later proof to be very useful. 

\begin{proposition}[$\idis$-disjunction property]\label{disjunctionprop}
Let $\mathcal{L}\in \{\PL(\idis), \ML(\idis)\}$. For every $\varphi,\psi$ in $\mathcal{L}$, 
$\models_{\mathcal{L}}(\varphi\idis\psi)$ iff $\models_{\mathcal{L}}\varphi$ or $\models_{\mathcal{L}}\psi$. 
\end{proposition}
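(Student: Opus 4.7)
The right-to-left direction is immediate from the clause $X \models (\varphi \idis \psi)$ iff $X \models \varphi$ or $X \models \psi$ (and analogously in the modal case). So the content of the statement lies in the forward direction, which I would prove by contraposition: assuming $\not\models_\mathcal{L} \varphi$ and $\not\models_\mathcal{L} \psi$, I will build a single team (in the modal case, a single Kripke model and team) that refutes both $\varphi$ and $\psi$ simultaneously.

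For the propositional case $\mathcal{L} = \PL(\idis)$, the plan is to fix a finite $D \subseteq \Phi$ containing every proposition symbol occurring in $\varphi$ or $\psi$ and pick witnessing teams $X_1, X_2 \subseteq \{0,1\}^D$ with $X_1 \not\models \varphi$ and $X_2 \not\models \psi$. Setting $X \dfn X_1 \cup X_2$, the contrapositive of Proposition~\ref{dcprop} gives $X \not\models \varphi$ (since $X_1 \subseteq X$) and $X \not\models \psi$ (since $X_2 \subseteq X$), whence $X \not\models (\varphi \idis \psi)$ by the semantic clause for $\idis$, contradicting the assumed validity.

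For the modal case $\mathcal{L} = \ML(\idis)$, I would run the same combining idea on Kripke models. Pick witnesses $(\mathrm{K}_1, T_1)$ and $(\mathrm{K}_2, T_2)$ refuting $\varphi$ and $\psi$ respectively, form the disjoint union $\mathrm{K} \dfn \mathrm{K}_1 \sqcup \mathrm{K}_2$, and take $T \dfn T_1 \cup T_2$ under the canonical inclusion of the components. The key auxiliary step, which is the only non-trivial part of the argument, is a locality lemma: for any $\chi \in \ML(\idis)$ and any team $S$ entirely contained in one component $\mathrm{K}_i$, one has $\mathrm{K}, S \models \chi$ iff $\mathrm{K}_i, S \models \chi$. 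This follows by a routine induction on $\chi$, the crucial observation being that since no edges of $\mathrm{K}$ cross between components, $R[S]$ and the set of successor teams $T'$ with $S [R] T'$ are identical when computed in $\mathrm{K}$ or in $\mathrm{K}_i$. Granted this lemma, $\mathrm{K}, T_i \not\models \chi_i$ for $\chi_1 = \varphi$, $\chi_2 = \psi$, and then Proposition~\ref{dcml} together with $T_i \subseteq T$ yields $\mathrm{K}, T \not\models \varphi$ and $\mathrm{K}, T \not\models \psi$, hence $\mathrm{K}, T \not\models (\varphi \idis \psi)$, again contradicting validity.

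The only real obstacle is the locality lemma in the modal case; everything else is a direct consequence of downward closure. I expect the induction to be completely standard, with the $\Diamond$ and $\Box$ cases handled by the fact that $R[T_1 \cup T_2]$ in the disjoint union decomposes as the disjoint union $R_1[T_1] \cup R_2[T_2]$, so witnessing successor teams in $\mathrm{K}$ starting from $T_i$ coincide with those in $\mathrm{K}_i$.
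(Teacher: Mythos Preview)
Your argument is correct. The paper does not actually prove this proposition: it is quoted from \cite{virtema14} without proof, so there is no in-paper argument to compare against. Your approach---take refuting witnesses for $\varphi$ and $\psi$, combine them (by union of teams in the propositional case, by disjoint union of Kripke models in the modal case), and use downward closure to conclude that the combined team refutes both disjuncts---is the standard one and is exactly what one finds in the cited source.

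One small point worth making explicit in the propositional case: when you write ``pick witnessing teams $X_1, X_2 \subseteq \{0,1\}^D$'' for the common domain $D$, you are implicitly invoking a coincidence lemma (the truth of a formula in a team depends only on the values of the assignments on the proposition symbols actually occurring in the formula). This is routine and true, but the paper does not state it, so if you want the write-up to be fully self-contained you might add a line justifying why the refuting teams can be taken over the common enlarged domain $D$. Alternatively, you can bypass this entirely by invoking Proposition~\ref{coherencepl} and simply taking $X = \{0,1\}^D$ directly: if $\varphi$ (resp.\ $\psi$) is not valid then the full team over $D$ already refutes it, and you never need to form a union. In the modal case your locality lemma for disjoint unions is indeed a straightforward induction, and your description of the $\Diamond$ and $\Box$ cases is accurate.
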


\section{Extending axiomatizations of $\PL$ and $\ML$}\label{axiomatization1}
In this section we show how to extend sound and complete axiomatizations for $\PL$ and $\ML$ into sound and complete axiomatizations for $\PL(\varovee)$ and $\ML(\varovee)$, respectively. We use the fact that both $\PL(\varovee)$ and $\ML(\varovee)$ have the $\idis$-disjunction property. In addition, we obtain axiomatizations for $\PD$, $\MDL$, and $\EMDL$. The axiomatizations are based on compositional translations from $\PD$ into $\PL(\varovee)$, and from $\MDL$ and $\EMDL$ into $\ML(\varovee)$.

\subsection{Axiomatizations for $\PL(\idis)$ and $\ML(\idis)$}
In the definition below, we treat different occurrences of formulas as distinct entities.
\begin{definition}
Let $\varphi$ be a formula of $\PL(\idis)$ or $\ML(\idis)$. Let $\mathrm{SubOcc}(\varphi)$ denote the \emph{set of exactly all occurrences of subformulas of $\varphi$}.  Define
$$\mathrm{SubOcc}_{\idis}(\varphi)\dfn \{(\psi\idis\theta) \mid (\psi\idis\theta)\in\mathrm{SubOcc}(\varphi)\}.$$
We call a function $f:\mathrm{SubOcc}_{\idis}(\varphi)\to \mathrm{SubOcc}(\varphi)$ a \emph{$\idis$-selection function for $\varphi$} if $f\big((\psi\idis\theta)\big)\in \{\psi,\theta\}$, for every $(\psi\idis\theta)\in \mathrm{SubOcc}_{\idis}(\varphi)$. If $f$ is a $\idis$-selection function for $\varphi$, then $\varphi^f$ denotes the formula that is obtained from $\varphi$ by replacing simultaneously each $(\psi\idis\theta) \in  \mathrm{SubOcc}_{\idis}(\varphi)$ by $f(\psi\idis\theta)$.
\end{definition}
Note that if $\varphi\in\PL(\idis)$, $\psi\in\ML(\idis)$, $f$ is a $\idis$-selection function for $\varphi$, and $g$ is a $\idis$-selection function for $\psi$  then $\varphi^f\in\PL$ and $\psi^g\in\ML$.

\begin{proposition}[\cite{virtema14}]\label{normalform}
Let $\varphi$ be a formula of $\PL(\idis)$ or $\ML(\idis)$, and let $F$ be the set of exactly all $\idis$-selection functions for $\varphi$. Then, $\varphi \equiv \Idis_{f\in F} \varphi^f$. 
\end{proposition}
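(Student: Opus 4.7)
The plan is to prove $\varphi \equiv \Idis_{f \in F} \varphi^f$ by structural induction on $\varphi$, showing that every connective of $\PL(\idis)$ and $\ML(\idis)$ other than $\idis$ distributes over $\idis$, so that all occurrences of $\idis$ can be pushed to the top. The base case $\varphi = p$ or $\varphi = \neg p$ is immediate: $\mathrm{SubOcc}_{\idis}(\varphi) = \emptyset$, so $F$ contains only the empty function $f_\emptyset$ and $\varphi^{f_\emptyset} = \varphi$.

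For the inductive step I would handle the top-level connective case by case, using that a $\idis$-selection function for $\varphi$ decomposes canonically into $\idis$-selection functions for its immediate subformulas. For $\varphi = \psi \idis \theta$ I would split $F = F_\psi \cup F_\theta$ according to whether $f$ picks $\psi$ or $\theta$ at the outermost $\idis$; the induction hypothesis then yields $\Idis_{f \in F} \varphi^f \equiv (\Idis_{g} \psi^g) \idis (\Idis_{h} \theta^h) \equiv \psi \idis \theta$. For $\varphi = \psi \star \theta$ with $\star \in \{\wedge,\vee\}$, a selection function is essentially a pair $(g,h)$ of selection functions for $\psi$ and $\theta$, with $\varphi^{(g,h)} = \psi^g \star \theta^h$; the claim then reduces to the distributive laws $(\alpha \idis \beta) \star \gamma \equiv (\alpha \star \gamma) \idis (\beta \star \gamma)$. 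The case $\star = \wedge$ is immediate from the conjunction clause; the case $\star = \vee$ is a short computation from the split semantics of $\vee$, using downward closure (Propositions~\ref{dcprop} and \ref{dcml}) to recombine witnessing subteams. For the modal cases $\varphi = \Diamond \psi$ and $\varphi = \Box \psi$ (vacuous in the propositional setting), a selection function for $\varphi$ is just one for $\psi$ and the claim reduces to $\Diamond(\alpha \idis \beta) \equiv \Diamond \alpha \idis \Diamond \beta$ and $\Box(\alpha \idis \beta) \equiv \Box \alpha \idis \Box \beta$; the $\Box$-equivalence is immediate from the clause $\mathrm{K},T \models \Box\varphi \iff \mathrm{K},R[T]\models\varphi$, while for $\Diamond$ any witness team $T'$ for $\Diamond(\alpha \idis \beta)$ already satisfies $\alpha$ outright or $\beta$ outright, and conversely any $\Diamond\alpha$- or $\Diamond\beta$-witness is a $\Diamond(\alpha\idis\beta)$-witness.

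The main obstacle, and really the only one, is bookkeeping: setting up the canonical bijection between $F$ and tuples of $\idis$-selection functions for the immediate subformulas, and verifying that $\varphi^f$ equals the corresponding syntactic combination of the $\psi^g$'s, so that the binary distributive laws above can be iterated to produce the single outer $\Idis$. Once this is settled, each case is a direct semantic distributivity computation; no step requires anything beyond the semantic clauses and, in the $\vee$-case, downward closure.
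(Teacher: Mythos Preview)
The paper does not supply its own proof of this proposition; it is quoted from \cite{virtema14} and used as a black box. So there is nothing in the present paper to compare against.

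Your inductive argument is correct and is the standard route: each connective of $\PL(\idis)$ and $\ML(\idis)$ distributes over $\idis$, so a bottom-up induction pushes all $\idis$'s to the outside. One small remark: you do not actually need downward closure in the $\vee$-case. The distributivity $\bigl(\Idis_{g}\psi^{g}\bigr)\vee\bigl(\Idis_{h}\theta^{h}\bigr)\equiv\Idis_{g,h}(\psi^{g}\vee\theta^{h})$ follows just by commuting existential quantifiers: a split $X=Y\cup Z$ together with ``$Y\models\psi^{g}$ for some $g$'' and ``$Z\models\theta^{h}$ for some $h$'' is literally the same data as ``for some $g,h$ there is a split witnessing $\psi^{g}\vee\theta^{h}$''. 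Invoking downward closure is harmless but unnecessary, and dropping it makes the argument cleaner (and shows the equivalence would survive in non-downward-closed extensions as well). The bookkeeping you flag---matching $\idis$-selection functions for $\varphi$ with tuples of selection functions for its immediate subformulas---is indeed pure accountancy once one remembers that $\mathrm{SubOcc}_{\idis}$ tracks \emph{occurrences}, so the subformula occurrence sets are disjoint.
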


Let $\mathbf{H}_{\PL}$ and $\mathbf{H}_{\ML}$ denote sound and complete axiomatizations of the negation normal form fragments of $\PL$ and $\ML$, respectively. For a logic $\mathcal{L}$, an \emph{$\mathcal{L}$-context} is a formula of the logic $\mathcal{L}$ extended with the grammar rule $\varphi\ddfn *$. By $\varphi(\psi \,/\, *)$ we denote the formula that is obtained form $\varphi$ by uniformly substituting each occurrence of $*$ in $\varphi$ by $\psi$. 
We are now ready to define the axiomatizations for $\PL(\idis)$ and $\ML(\idis)$.
%
%
%
%
We use $\PL(\idis)$- and $\ML(\idis)$-contexts in the following rules:
\begin{center}
\AxiomC{$\varphi(\psi_i \,/\, *)$}
\RightLabel{$(I \idis i)$}
\UnaryInfC{$\varphi\big((\psi_1\idis\psi_2) \,/\, * \big)$}
\DisplayProof
\quad
$i\in\{1,2\}$.
\end{center}
Let $\mathbf{H}_{\PL(\idis)}$ (or, $\mathbf{H}_{\ML(\idis)}$) be the calculus $\mathbf{H}_{\PL}$ (or, $\mathbf{H}_{\ML}$, respectively) extended with the rules $(I \idis 1)$ and $(I \idis 2)$.
\begin{theorem}\label{easycalculi}
$\mathbf{H}_{\PL(\idis)}$ and
$\mathbf{H}_{\ML(\idis)}$ are sound and complete.
\end{theorem}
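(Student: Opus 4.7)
The plan is to handle soundness and completeness separately, reusing Propositions \ref{normalform} and \ref{disjunctionprop} in a crucial way. For soundness it suffices to verify that each rule $(I\idis i)$ preserves validity, since soundness of the base calculus $\mathbf{H}_{\PL}$ (respectively $\mathbf{H}_{\ML}$) is assumed. The key semantic fact is that $X \models \psi_i$ implies $X \models (\psi_1 \idis \psi_2)$, which is immediate from the semantics of the intuitionistic disjunction. Because all formulas are in negation normal form, the context $\varphi$ in the rule consists only of connectives ($\land, \lor, \Box, \Diamond, \idis$) that are monotone in each of their arguments under team semantics. A routine induction on $\varphi$ as a context therefore yields that $X \models \varphi(\sigma/*)$ implies $X \models \varphi(\tau/*)$ whenever $\sigma$ semantically entails $\tau$; instantiating $\sigma := \psi_i$ and $\tau := (\psi_1 \idis \psi_2)$ gives the required validity preservation.

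For completeness, I assume $\models_{\mathcal{L}} \varphi$ with $\mathcal{L} \in \{\PL(\idis), \ML(\idis)\}$. Proposition \ref{normalform} gives $\varphi \equiv \Idis_{f \in F} \varphi^f$, and iterating Proposition \ref{disjunctionprop} (a straightforward induction on $|F|$) yields some $f_0 \in F$ with $\models_{\mathcal{L}} \varphi^{f_0}$. Since $\varphi^{f_0}$ contains no $\idis$, it is a formula of $\PL$ (respectively $\ML$), so completeness of $\mathbf{H}_{\PL}$ (respectively $\mathbf{H}_{\ML}$) delivers a derivation of $\varphi^{f_0}$ in the base calculus, which is then a derivation in the extended calculus. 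It will remain to reconstruct $\varphi$ from $\varphi^{f_0}$ using only the new rules $(I\idis 1)$ and $(I\idis 2)$.

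The reconstruction proceeds by induction on the number $n$ of $\idis$-subformula occurrences in $\varphi$. When $n = 0$ we have $\varphi^{f_0} = \varphi$ and there is nothing to do. For $n > 0$, I select an innermost $\idis$-occurrence $\sigma = (\chi_1 \idis \chi_2)$ in $\varphi$; here both $\chi_1$ and $\chi_2$ are $\idis$-free. Put $j := f_0(\sigma)$ and let $\varphi'$ be obtained from $\varphi$ by replacing this single occurrence of $\sigma$ by $\chi_j$, and let $f_0'$ be the restriction of $f_0$ to the remaining $\idis$-occurrences (now naturally identified with the $\idis$-occurrences of $\varphi'$). A short direct check shows $(\varphi')^{f_0'} = \varphi^{f_0}$, so the inductive hypothesis applied to $\varphi'$ (which has $n-1$ occurrences of $\idis$) yields $\vdash \varphi'$. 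One application of $(I\idis j)$, taking the context to be $\varphi$ with this distinguished occurrence of $\sigma$ replaced by the hole $*$ and taking $\psi_1, \psi_2$ to be $\chi_1, \chi_2$, then transforms $\varphi' = \varphi[\chi_j/\sigma]$ into $\varphi[(\chi_1 \idis \chi_2)/\sigma] = \varphi$, as desired.

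The main obstacle is the bookkeeping behind the identity $(\varphi')^{f_0'} = \varphi^{f_0}$: the argument must be uniform whether or not the distinguished innermost $\idis$-occurrence $\sigma$ survives the selections $f_0$ makes at its ancestors. When $\sigma$ lies on the ``selected path'' it appears in $\varphi^{f_0}$ precisely as $\chi_j$, which matches what replaces it in $\varphi'$ on the $f_0'$ side; when $\sigma$ lies in a branch discarded by some ancestor, both $\sigma$ and $\chi_j$ are eliminated by the same ancestor choices on the two sides, so the identity still holds. Once this verification is in place, everything else is routine.
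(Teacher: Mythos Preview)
Your proof is correct. The completeness direction coincides with the paper's argument; you have merely spelled out the reconstruction that the paper dismisses with ``clearly by using the rules $(I\idis 1)$ and $(I\idis 2)$ repetitively''. Your innermost-first induction and the bookkeeping around $(\varphi')^{f_0'}=\varphi^{f_0}$ are sound and make this step explicit.

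For soundness, however, you take a genuinely different route. The paper argues via Propositions~\ref{normalform} and~\ref{disjunctionprop}: if $\gamma_1=\varphi(\psi_1/\ast)$ is valid, then some $\gamma_1^{f'}$ is valid (disjunction property applied to the normal form), and since every $\idis$-selection function for $\gamma_1$ extends to one for $\gamma_2=\varphi((\psi_1\idis\psi_2)/\ast)$, some $\gamma_2^{g'}$ is valid, whence $\gamma_2$ is valid. You instead observe that in negation normal form the hole $\ast$ occurs only positively, so all context-forming operations $\land,\lor,\idis,\Diamond,\Box$ are monotone in team semantics; since $\psi_i\models(\psi_1\idis\psi_2)$, a straightforward induction on the context gives $\varphi(\psi_i/\ast)\models\varphi((\psi_1\idis\psi_2)/\ast)$. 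Your argument is more elementary (it avoids both auxiliary propositions for this direction) and in fact proves the stronger local fact that the rules preserve entailment, not just validity. The paper's argument, on the other hand, keeps the whole proof uniform: both directions rest on the same pair of propositions.
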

\begin{proof}
We will proof the soundness and completeness for $\mathbf{H}_{\PL(\idis)}$. The case for $\mathbf{H}_{\ML(\idis)}$ is completely analogous.

For soundness, it suffices to show that the rule $(I \idis 1)$ preserves validity. The case for $(I \idis 2)$ is symmetric. Let $\varphi$ be a $\PL(\idis)$-context and let $\psi_1$ and $\psi_2$ be $\PL(\idis)$-formulas. Assume that $\gamma_1\dfn\varphi(\psi_1 \,/\, *)$ is valid. We will show that then $\gamma_2\dfn\varphi\big((\psi_1\idis \psi_2) \,/\, *\big)$ is valid. Let $F$ and $G$ be the sets of exactly all $\idis$-selection functions for $\gamma_1$ and $\gamma_2$, respectively. By Proposition \ref{normalform}
\[
\gamma_1 \equiv {\Idis}_{f\in F} \gamma_1^f \,\text{ and }\, \gamma_2 \equiv {\Idis}_{g\in G} \gamma_2^g.
\]
Since $\gamma_1$ is valid, it follows, by Proposition \ref{disjunctionprop}, that $\gamma_1^{f'}$ is valid for some $f'\in F$. Since clearly, for every $f\in F$, there exists some $g\in G$ such that $\gamma_1^f=\gamma_2^g$, it follows that there exists some $g'\in G$ such that $\gamma_2^{g'}$ is valid. Thus $\gamma_2$ is valid.

In order to prove completeness, assume that a $\PL(\varovee)$-formula $\varphi$ is valid. Let $F$ be the set of exactly all $\idis$-selection functions for $\varphi$. By Propositions \ref{normalform} and \ref{disjunctionprop}, there exists an $f\in F$ such that the $\PL$ formula $\varphi^f$ is valid. Since $\mathbf{H}_{\PL}$ is complete, $\varphi^f$ is provable also in $\mathbf{H}_{ \PL(\varovee)}$. Clearly by using the rules $(I\idis 1)$ and $(I\idis 2)$ repetitively to $\varphi^f$, we eventually obtain $\varphi$. Thus we conclude that $\mathbf{H}_{\PL(\idis)}$ is complete.
\end{proof}
%
%
\subsection{Axiomatizations for $\PD$, $\MDL$, and $\EMDL$}
The following equivalence was observed by V\"a\"an\"anen in \cite{va08}:
\begin{equation}\label{translation1}
\dep{p_1,\dots,p_n,q} \equiv \bigvee_{a_1,\dots,a_n\in \{\bot,\top\}}
\bigwedge \big\{  p_1^{a_1}, \ldots, p_n^{a_n}, (q\varovee q^{\bot}) \big \}.
\end{equation}
Ebbing et all. (\cite{EHMMVV13}) extended this observation of V\"a\"an\"anen into the following equivalence concerning $\EMDL$:
\begin{equation}\label{translation2}
\dep{\varphi_1,\dots,\varphi_n,\psi} \equiv \bigvee_{a_1,\dots,a_n\in \{\bot,\top\}}
\bigwedge \big\{  \varphi_1^{a_1}, \ldots, \varphi_n^{a_n}, (\psi\varovee \psi^{\bot}) \big \}.
\end{equation}
These equivalences demonstrate the existence of compositional translations from $\PD$ into $\PL(\idis)$, and from $\MDL$ and $\EMDL$ into $\ML(\idis)$, respectively. 

We will use the insight that rises from combining the above equivalences with the Propositions \ref{disjunctionprop} and \ref{normalform} in order to construct axiomatizations for $\PD$, $\MDL$, and $\EMDL$, respectively.
Recall that when $\vec{a}$ is a finite tuple of symbols, we use $a_j$ to denote the $j$th member of $\vec{a}$.
For each natural number $n\in\mathbb{N}$, and function $f:\{\bot,\top\}^n\to \{\top,\bot\}$ we have the following rules:
\begin{small}
\begin{prooftree}
\AxiomC{$\varphi\Big(\bigvee_{\vec{a}\in \{\bot,\top\}^n}
\bigwedge \big\{ p_1^{a_1}, \ldots, p_n^{a_n}, q^{f(\vec{a})} \big\} \,/\, * \Big) $} 
\RightLabel{$\big(\PL \,\dep f\big)$}
\UnaryInfC{$\varphi\big(\dep{p_1,\dots,p_n,q} \,/\, * \big)$}
\end{prooftree}
\begin{prooftree}
\AxiomC{$\varphi\Big(\bigvee_{\vec{a}\in \{\bot,\top\}^n} \bigwedge \big\{ \varphi_1^{a_1}, \ldots, \varphi_n^{a_n}, \psi^{f(\vec{a})} \big\} \,/\, * \Big)$}
\RightLabel{$\big(\ML \,\dep f \big)\dagger$}
\UnaryInfC{$\varphi\big(\dep{\varphi_1,\dots,\varphi_n,\psi} \,/\, * \big)$}
\end{prooftree}
\end{small}
where $\dagger$ means that $\varphi_1, \dots, \varphi_n,\psi$ are required to be modal formulae. 
Define $\mathsf{\PL dep} \dfn \{ \big(\PL \,\dep f\big) \mid f:\{\bot,\top\}^n\to \{\top,\bot\} \text{, where }  n\in\mathbb{N}\}$ and $\mathsf{\ML dep} \dfn \{ \big(\ML \,\dep f\big) \mid f:\{\bot,\top\}^n\to \{\top,\bot\} \text{, where } n\in\mathbb{N}\}$. Let
$\mathbf{H}_{\PD}$ and $\mathbf{H}_{\MDL}$ be the extensions of the calculi $\mathbf{H}_{\PL}$ and $\mathbf{H}_{\ML}$ by the rules of $\mathsf{\PL dep}$, respectively. 
Let $\mathbf{H}_{\EMDL}$ be the extension of $\mathbf{H}_{\ML}$ by the rules of $\mathsf{\ML dep}$.

The proof of the following theorem is analogous to that of Theorem \ref{easycalculi}.
\begin{theorem}
Let $\mathcal{L} \in \{\PD, \MDL, \EMDL \}$, $\mathbf{H}_{\mathcal{L}}$ is sound and complete.
\end{theorem}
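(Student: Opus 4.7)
My plan is to follow the template of Theorem \ref{easycalculi}, treating the compositional translations (\ref{translation1}) and (\ref{translation2}) as the dependence-atom analogues of Proposition \ref{normalform}. The key observation is that each rule $(\PL \dep f)$ and $(\ML \dep f)$ is indexed by exactly one $\varovee$-selection for the right-hand side of the corresponding equivalence, so quantifying over $f$ in $\mathsf{\PL dep}$ and $\mathsf{\ML dep}$ matches quantifying over $\varovee$-selection functions in the earlier proof. For soundness of $(\PL \dep f)$, suppose the premise $P_f := \varphi(\delta_f / *)$ is valid, where $\delta_f := \bigvee_{\vec{a}} \bigwedge\{p_1^{a_1},\ldots,p_n^{a_n}, q^{f(\vec{a})}\}$, and let $C := \varphi(\dep{p_1,\ldots,p_n,q}/*)$ denote the conclusion. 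An easy induction on contexts establishes that $\equiv$ is a congruence for all connectives of $\PD$, so (\ref{translation1}) gives $C \equiv \varphi(\gamma/*)$ with $\gamma := \bigvee_{\vec{a}} \bigwedge\{p_1^{a_1},\ldots,p_n^{a_n}, (q \varovee q^\bot)\}$. Since $\varphi$ contains no $\varovee$, the $\varovee$-selection functions for $\varphi(\gamma/*)$ are in bijection with maps $f' : \{\bot,\top\}^n \to \{\top,\bot\}$, and the $f'$-reduct equals $P_{f'}$; by Proposition \ref{normalform}, $C \equiv \Idis_{f'} P_{f'}$, and validity of $P_f$ makes this intuitionistic disjunction valid, hence $C$ is valid.

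For completeness in the $\PD$ case, let $\varphi \in \PD$ be valid and enumerate its dependence-atom occurrences $d_1, \ldots, d_m$ with $d_i = \dep{p^i_1,\ldots,p^i_{n_i}, q_i}$. Replacing each $d_i$ via (\ref{translation1}) yields a formula $\varphi^* \in \PL(\idis)$ with $\varphi^* \equiv \varphi$, hence valid, whose $\varovee$-selection functions are in bijection with tuples $(f_1, \ldots, f_m)$ of $f_i: \{\bot,\top\}^{n_i} \to \{\top,\bot\}$. Propositions \ref{normalform} and \ref{disjunctionprop} then provide a tuple $(f_1^0, \ldots, f_m^0)$ for which the corresponding $\PL$-reduct $\chi$ is valid, hence provable in $\mathbf{H}_{\PL} \subseteq \mathbf{H}_{\PD}$. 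Applying rule $(\PL \dep{f_i^0})$ once for each $i = 1,\ldots,m$, with the obvious context isolating the $i$-th expansion, converts $\chi$ step by step into $\varphi$.

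The $\MDL$ and $\EMDL$ cases are handled in the same way, using (\ref{translation2}) in place of (\ref{translation1}), the rules $(\ML \dep f)$, and the base calculus $\mathbf{H}_{\ML}$; for $\MDL$, (\ref{translation2}) specializes to (\ref{translation1}) since the inner formulas are proposition symbols. The step most deserving care is the congruence lemma $\alpha \equiv \beta \Rightarrow \varphi(\alpha/*) \equiv \varphi(\beta/*)$ for all operators, including $\Diamond$ and $\Box$; this is routine in team semantics but has to be invoked explicitly in the soundness arguments. I do not anticipate any substantive obstacle beyond the bookkeeping of matching the rule parametrization $f$ to the $\varovee$-selection functions arising from the translations, which is the same conceptual move already carried out in Theorem \ref{easycalculi}.
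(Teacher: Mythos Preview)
Your proposal is correct and follows exactly the approach the paper intends: the paper's own proof consists of the single remark that it is ``analogous to that of Theorem~\ref{easycalculi}'', and you have spelled out precisely this analogy in detail. The only wrinkle is that your soundness argument invokes Proposition~\ref{normalform} on $\varphi(\gamma/\,*)$, which may still contain dependence atoms and so lies outside the stated scope of that proposition; this is harmless, since the proposition extends immediately to such formulas by the same downward-closure argument (or, more directly, soundness follows from monotonicity of negation-normal-form contexts together with the semantic entailment $\delta_f \models \dep{p_1,\ldots,p_n,q}$).
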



\section{Labeled tableaus for propositional dependence logics}\label{axiomatization2}
The calculi presented in Section \ref{axiomatization1} have a few clear shortcomings. Foremost, the calculi miss the team semantic nature of these logics. Thus the calculi are in some parts quite complicated. Especially this is the case for the rules $\mathsf{\PL \,dep}$ and $\mathsf{\ML \,dep}$. This seems to be the case also for any concrete implementations of the axiomatizations $\mathbf{H}_{\PL}$ and $\mathbf{H}_{\ML}$ of the negation normal form fragments of $\PL$ and $\ML$, respectively.

In this section we give axiomatizations for $\PD$, $\MDL$, and $\EMDL$ that do not have the shortcomings of the calculi of Section \ref{axiomatization1}. The proof rules of the labeled tableau calculi that we give in this section have a natural and simple correspondence with the truth definitions of connectives and modalities in team semantics.

\subsection{Checking validity via small teams}

The following result (observed e.g. in \cite{virtema14}) follows directly from the fact that $\PL(\idis)$ and $\PD$ are downwards closed, i.e., Proposition \ref{dcprop}.
\begin{proposition}\label{coherencepl}
Let $\varphi$ be a formula of $\PL(\idis)$ or $\PD$ and let $D$ be the set of proposition symbols occurring in $\varphi$. Now
\(
\varphi \text{ is valid} \,\text{ iff }\, \{0,1\}^D\models \varphi.
\)
\end{proposition}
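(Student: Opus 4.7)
I would prove the two directions separately. Left-to-right is immediate: by the definition of validity, $Y \models \varphi$ holds for every team $Y$ whose domain is a finite subset of $\Phi$ containing the propositions of $\varphi$, and $\{0,1\}^D$ is one such team.

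For the right-to-left direction the plan is to combine downwards closure (Proposition \ref{dcprop}) with a \emph{locality lemma} stating that, for any finite $D' \subseteq \Phi$ with $D \subseteq D'$ and any team $X$ of domain $D'$, we have $X \models \varphi$ iff $X|_D \models \varphi$, where $X|_D \dfn \{ s|_D \mid s \in X \}$. The lemma is established by induction on $\varphi$. The cases of literals and of a dependence atom $\dep{p_1, \dots, p_n, q}$ are immediate, because the truth clauses only inspect the values of symbols in $D$; the cases of $\wedge$ and $\idis$ follow at once from the inductive hypothesis. The one case that needs some care is $\vee$: given a witnessing split $X|_D = Y' \cup Z'$ for $X|_D \models \varphi_1 \vee \varphi_2$, I would lift it by setting $Y \dfn \{ s \in X \mid s|_D \in Y' \}$ and $Z \dfn \{ s \in X \mid s|_D \in Z' \}$, verify that $Y \cup Z = X$ with $Y|_D = Y'$ and $Z|_D = Z'$, and invoke the inductive hypothesis; the converse direction of the $\vee$ case is easier since restrictions commute with unions.

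Once locality is available the proof finishes in one line: for any team $X$ of domain $D' \supseteq D$, the team $X|_D$ is a subteam of $\{0,1\}^D$, so $\{0,1\}^D \models \varphi$ together with Proposition \ref{dcprop} gives $X|_D \models \varphi$, and locality then yields $X \models \varphi$. The main obstacle is really only the disjunction clause of the locality lemma; every other step is a direct unfolding of the team-semantic clauses. I would not expect any surprises here, since this is exactly the flatness-plus-locality pattern that makes downward-closed team logics amenable to small-team semantic testing.
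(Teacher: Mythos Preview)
Your argument is correct. The paper itself does not give a detailed proof: it simply remarks that the proposition ``follows directly from the fact that $\PL(\idis)$ and $\PD$ are downwards closed'' and cites Proposition~\ref{dcprop}. In other words, the paper's proof and yours rest on the same idea, downwards closure applied to the maximal team $\{0,1\}^D$.

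The one respect in which your write-up differs is that you make explicit the locality step the paper leaves tacit. Downwards closure, as stated in Proposition~\ref{dcprop}, compares teams of the \emph{same} domain, so strictly speaking it only yields $X\models\varphi$ for $X\subseteq\{0,1\}^D$; to reach arbitrary teams over larger domains $D'\supseteq D$ one indeed needs the restriction lemma $X\models\varphi\Leftrightarrow X|_D\models\varphi$. Your inductive sketch of that lemma is correct, including the $\vee$-case lift, which is the only non-trivial clause. So your proposal is the paper's argument with the missing locality detail spelled out; nothing is lost, and a small gap is closed.
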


Adapting a notion that
was introduced by Jarmo Kontinen in \cite{jarmo} for first-order dependence logic,
we say that a $\ML(\idis)$- or $\EMDL$-formula $\varphi$ is \emph{$n$-coherent} if the condition 
$$
	\mathrm{K},T\models\varphi\quad\Leftrightarrow\quad \mathrm{K},T'\models\varphi\text{ for all $T'\subseteq T$ 
	such that }|T'|\le n
$$
holds for all Kripke models $\mathrm{K}$ and teams $T$ of $\mathrm{K}$.
The following result for $\ML(\idis)$ was shown in \cite{HeLuSaVi14}. The result for $\EMDL$ follows from the result for $\ML(\idis)$ essentially via the following equivalence.
\begin{equation*}
\dep{\varphi_1,\dots,\varphi_n,\psi} \equiv \bigvee_{a_1,\dots,a_n\in \{\bot,\top\}}
\bigwedge \big\{  \varphi_1^{a_1}, \ldots, \varphi_n^{a_n}, (\psi\varovee \psi^{\bot}) \big \}.
\end{equation*}
For $\varphi\in \ML(\idis)$, we define $\vrank{\varphi}$ to be the number of intuitionistic disjunctions in $\varphi$. For $\psi\in \EMDL$, we define $\vrank{\psi}$ to be the number of intuitionistic disjunctions in the $\ML(\idis)$ formula obtained by using the above equivalence.
\begin{theorem}\label{coherenceml}
Every formula $\varphi$ of $\ML(\idis)$ or $\EMDL$ is $2^{\vrank{\varphi}}$-coherent.
\end{theorem}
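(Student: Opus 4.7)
The plan is to prove the $\ML(\idis)$ case first and then derive the $\EMDL$ case by applying the semantics-preserving translation $(\ref{translation2})$ used in the statement to define $\vrank{\cdot}$ on $\EMDL$.

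For the $\ML(\idis)$ case, fix $\varphi \in \ML(\idis)$, a Kripke model $\mathrm{K}$, and a team $T$; set $n \dfn 2^{\vrank{\varphi}}$. The left-to-right direction is immediate from downwards closure (Proposition \ref{dcml}): if $\mathrm{K},T\models\varphi$, then $\mathrm{K},T'\models\varphi$ for every $T'\subseteq T$, regardless of cardinality. The content lies in the converse, and here I would exploit the normal-form theorem (Proposition \ref{normalform}) together with the flatness of $\ML$.

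Let $F$ be the set of all $\idis$-selection functions for $\varphi$. Since an $\idis$-selection function makes one binary choice at each occurrence of $\idis$, we have $|F| = 2^{\vrank{\varphi}} = n$. By Proposition \ref{normalform}, $\varphi \equiv \Idis_{f\in F}\varphi^f$, and each $\varphi^f$ is a formula of $\ML$, hence flat (1-coherent by the proposition on traditional modal semantics). I would argue by contrapositive: assume $\mathrm{K},T\not\models\varphi$. Then $\mathrm{K},T\not\models\varphi^f$ for every $f\in F$, and by flatness of $\varphi^f$, for each $f\in F$ there exists a world $w_f\in T$ with $\mathrm{K},\{w_f\}\not\models\varphi^f$. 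Set $T' \dfn \{w_f \mid f\in F\} \subseteq T$; then $|T'|\le |F| = n$. Applying flatness of $\varphi^f$ once more yields $\mathrm{K},T'\not\models\varphi^f$ for every $f\in F$, so via the normal form $\mathrm{K},T'\not\models\varphi$. Contrapositively, if $\mathrm{K},T'\models\varphi$ for all $T'\subseteq T$ with $|T'|\le n$, then $\mathrm{K},T\models\varphi$.

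For $\psi \in \EMDL$, let $\psi^\star \in \ML(\idis)$ be the formula obtained by rewriting every dependence subformula $\dep{\varphi_1,\dots,\varphi_n,\chi}$ according to the equivalence displayed above the theorem; by construction $\psi \equiv \psi^\star$ and $\vrank{\psi} = \vrank{\psi^\star}$. Then the $\ML(\idis)$ case, applied to $\psi^\star$, gives $2^{\vrank{\psi}}$-coherence of $\psi$. The main potential obstacle is keeping the cardinality count tight, since a looser induction on structure would blow up the coherence constant at each tensor disjunction; the reduction via normal form sidesteps this entirely, because the bound $n$ arises uniformly as the number of $\ML$-disjuncts $\varphi^f$ and each contributes only one witness world.
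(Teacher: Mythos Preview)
Your argument is correct. For the $\EMDL$ case you take the same route as the paper, reducing to $\ML(\idis)$ via the translation~(\ref{translation2}). For the $\ML(\idis)$ case the paper does not prove the result here but cites~\cite{HeLuSaVi14}; your argument via Proposition~\ref{normalform} is a clean, self-contained alternative that uses only tools already stated in the present paper. The key observation---that there are exactly $2^{\vrank{\varphi}}$ selection functions and each $\varphi^f\in\ML$ is flat---delivers the tight coherence bound in a single step, sidestepping the connective-by-connective bookkeeping that a direct structural induction would require (and that you rightly flag as the potential obstacle at the split disjunction~$\lor$).
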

The following result follows directly from Theorem \ref{coherenceml}. 
\begin{corollary}\label{cor:coherence}
Let $\varphi$ be a formula of $\ML(\idis)$ or $\EMDL$. The following holds:
\begin{align*}
\varphi \text{ is valid}  \quad\text{iff}\quad &\mathrm{K},T\models \varphi \text{ for every Kripke model $\mathrm{K}$ and every team $T$ of $\mathrm{K}$}\\
&\text{such that $\lvert T \rvert \leq 2^{\vrank{\varphi}}$.}\\
%
\end{align*}
\end{corollary}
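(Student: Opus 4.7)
The plan is to derive the corollary directly from Theorem \ref{coherenceml}, since all real content resides there. The forward implication is immediate from the definition of validity: if $\varphi$ is valid then $\mathrm{K}, T \models \varphi$ holds for every Kripke model $\mathrm{K}$ and team $T$ of $\mathrm{K}$, so in particular for those with $|T| \le 2^{\vrank{\varphi}}$.

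For the converse direction, I would fix an arbitrary Kripke model $\mathrm{K}$ and team $T$ of $\mathrm{K}$ and aim to establish $\mathrm{K}, T \models \varphi$. By Theorem \ref{coherenceml}, $\varphi$ is $2^{\vrank{\varphi}}$-coherent, so $\mathrm{K}, T \models \varphi$ is equivalent to the statement that $\mathrm{K}, T' \models \varphi$ holds for every subteam $T' \subseteq T$ with $|T'| \le 2^{\vrank{\varphi}}$. Each such $T'$ is itself a team of $\mathrm{K}$ of size at most $2^{\vrank{\varphi}}$, so the right-hand side hypothesis of the corollary applies to $(\mathrm{K}, T')$ and yields $\mathrm{K}, T' \models \varphi$. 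Therefore $\mathrm{K}, T \models \varphi$, and since $\mathrm{K}$ and $T$ were arbitrary, $\varphi$ is valid.

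There is no serious obstacle: the argument is essentially a one-line repackaging of coherence as a bounded-witness characterization of validity. The only point worth flagging is that it is specifically the nontrivial direction of the coherence biconditional (from small-team satisfaction to full-team satisfaction) that is being invoked; the other direction is subsumed by downwards closure (Proposition \ref{dcml}) and plays no role here.
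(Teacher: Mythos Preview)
Your proof is correct and matches the paper's approach: the paper simply states that the corollary ``follows directly from Theorem~\ref{coherenceml},'' and your argument is precisely the natural unpacking of that remark. There is nothing to add.
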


\subsection{Tableau Calculi for $\PL$, $\PL(\idis)$, and $\PD$}\label{tableauforpl}
We will now present labeled tableau calculi for $\PL$, $\PL(\idis)$, and $\PD$. In Section \ref{tableauforml} we will extend these calculi to deal with $\ML$, $\MDL$, and $\EMDL$.

Any finite, possibly empty, subset $\alpha \subseteq \mathbb{N}$ is called a {\em label}. We mainly use symbols $\alpha, \beta, \alpha_1,\alpha_2,\beta_1,\beta_2$, etc, in order to refer to labels and symbols $i,j,i_1,i_2,j_1,j_2$, etc,  in order to refer to natural numbers. 
Our tableau calculi are labeled, meaning that the formulas occurring in the tableau rules are \emph{labeled formulae}, i.e., of the form $\alpha:\varphi$, where $\alpha$ a label and $\varphi$ is a formula of some logic $\mathcal{L}$. Labels correspond to teams and the elements of labels, i.e., natural numbers, correspond to points in a model. The intended reading of the labeled formula $\alpha:\varphi$ is that $\alpha$ denotes some team that {\em falsifies} $\varphi$. A tableau in these calculi is just a well-founded, finitely branching tree in which each node is labeled by a labeled formula, and the edges represent applications of the tableau rules. The tableau rules needed for axiomatizing $\PL$, $\PL(\idis)$, and $\PD$ are given in Table \ref{table:tab_rule_prop}.

\begin{table}[btp]
\begin{center}
\begin{footnotesize}
\AxiomC{$\{i_1,\dots,i_k\}\,:\,p$}
\RightLabel{$(Prop)$}
\UnaryInfC{$\{i_1\}\, : \, p \, \mid\, \dots\, \mid\, \{i_k\}\, : \, p$}
\DisplayProof
\quad %
\AxiomC{$\{i_1,\dots,i_k\}\,:\,\neg p$}
\RightLabel{$(\neg Prop)$}
\UnaryInfC{$\{i_1\}\, : \, \neg p \, \mid\, \dots\, \mid\, \{i_k\}\, : \, \neg p$}
\DisplayProof
\AxiomC{$\alpha\,:\,(\varphi\land\psi)$}
\RightLabel{$(\land)$}
\UnaryInfC{$\alpha\,:\,\varphi \,\mid\, \alpha\,:\,\psi$}
\DisplayProof
\quad
\AxiomC{$\alpha \,:\,(\varphi\lor\psi)$}
\RightLabel{$(\lor)$ where $\beta\subseteq \alpha$}
\UnaryInfC{$\beta \,:\,\varphi \,\mid\, \alpha\setminus \beta \,:\, \psi$}
\DisplayProof
\quad
\AxiomC{$\alpha \,:\,(\varphi\idis\psi)$}
\RightLabel{$(\idis)$}
\UnaryInfC{$\alpha \,:\,\varphi$}
\noLine
\UnaryInfC{$\alpha \,:\,\psi$}
\DisplayProof
\begin{prooftree}
\AxiomC{$\alpha \,:\, \dep{p_1,\dots,p_n,q}$}
\RightLabel{$(Split)\dagger$}
\UnaryInfC{$\alpha_1 \,:\, \dep{p_1,\dots,p_n,q} \,\mid\, \dots \,\mid\, \alpha_k \,:\, \dep{p_1,\dots,p_n,q}$}
\end{prooftree}
$\dagger$: $\alpha_1, \dots, \alpha_k$ are exactly all subsets of $\alpha$ of cardinality $2$.
\begin{prooftree}
\AxiomC{$\{i_1,i_2\} \,:\, \dep{p_1,\dots,p_n,q}$}
\RightLabel{$(\PL \, dep)$$\ddagger$}
\UnaryInfC{$\{i_1\} \,:\, p_1^{g_{1}(1)} \,\mid\, \dots \,\mid\, \{i_1\} \,:\, p_1^{g_{k}(1)}$}
\alwaysNoLine
\UnaryInfC{$\{i_2\} \,:\, p_1^{g_{1}(1)} \,\mid\, \dots \,\mid\, \{i_2\} \,:\, p_1^{g_{k}(1)}$}
\UnaryInfC{$ \vdots\quad\quad\quad\quad\quad \vdots  \quad\quad\quad\quad\quad\vdots$}
\UnaryInfC{$\{i_1\} \,:\, p_n^{g_{1}(n)} \,\mid\, \dots \,\mid\, \{i_1\} \,:\, p_n^{g_{k}(n)}$}
\UnaryInfC{$\{i_2\} \,:\, p_n^{g_{1}(n)} \,\mid\, \dots \,\mid\, \{i_2\} \,:\, p_n^{g_{k}(n)}$}
\UnaryInfC{$\{i_1,i_2\} \,:\, q \,\mid\, \dots \,\mid\, \{i_1,i_2\} \,:\, q$}
\UnaryInfC{$\{i_1,i_2\} \,:\, \neg q \,\mid\, \dots \,\mid\, \{i_1,i_2\} \,:\, \neg q$}
\end{prooftree}
$\ddagger$: $g_1, \dots g_k$ are exactly all functions with domain $\{1, \ldots, n\}$ and co-domain $\{ \top, \bot \}$.
\end{footnotesize}
\end{center}
\caption{Tableau Rules for $\mathbf{T}_\PL$, $\mathbf{T}_{\PL(\idis)}$, and $\mathbf{T}_\PD$}
\label{table:tab_rule_prop}
\end{table}

In the construction of tableaus, we impose a rule that a labeled formula is never added to a tableau branch in which it already occurs. A \emph{saturated tableau} is a tableau in which no rules can be applied or the application of the rules have no effect on the tableau. A \emph{saturated branch} is a branch of a saturated tableau. A branch of a tableau is called \emph{closed} if it contains at least one of the following:
\begin{enumerate}
\item Both $\alpha:p$ and $\alpha:\neg p$, for some label $\alpha$ and proposition symbol $p$.
\item $\emptyset:\varphi$, for some formula $\varphi$.
\item $\{i\}: \dep{p_1,\dots,p_n,q}$, for some proposition $p_1,\dots,p_n,q$, $n\in\mathbb{N}$, and $i\in\mathbb{N}$.
\end{enumerate}
If a branch of a tableau is not closed it is called \emph{open}. A tableau is called \emph{closed} if every branch of the tableau is closed. A tableau is called \emph{open} if at least one branch in the tableau is open.

Let $\mathbf{T}_{\PL}$ denote the calculi consisting of the rules $(Prop)$, $(\neg Prop)$, $(\land)$, and $(\lor)$ of Table \ref{table:tab_rule_prop}.
Let $\mathbf{T}_{\PL(\idis)}$ denote the extension of $\mathbf{T}_{\PL}$ by the rule $(\idis)$ of Table \ref{table:tab_rule_prop}, and
$\mathbf{T}_{\PD}$ denote the extension of $\mathbf{T}_{\PL}$ by the rules $(Split)$ and $(\PL \, dep)$ of Table \ref{table:tab_rule_prop}.

Let $\varphi$ be a formula of $\mathcal{L}\in \{\PL,\PL(\idis),\PD\}$. We say that a tableau $\mathcal{T}$ is a \emph{tableau for $\varphi$} if the root of $\mathcal{T}$ is $\{1,\dots,2^{\vrank{\varphi}}\}:\varphi$ and $\mathcal{T}$ is obtained by applying the rules of $\mathbf{T}_{\mathcal{L}}$. We say that $\varphi$ is \emph{provable} in $\mathbf{T}_{\mathcal{L}}$ and write $\vdash_{\mathbf{T}_{\mathcal{L}}}\varphi$ if there exists a closed tableau for $\varphi$.

\begin{theorem}[Termination of $\mathbf{T}_{\PL}$, $\mathbf{T}_{\PL(\idis)}$, and $\mathbf{T}_{\PD}$]\label{termination}
Let $\mathcal{L}$ be a logic in $\{\PL,\PL(\idis),\PD\}$ and $\varphi$ a $\mathcal{L}$-formula. Every tableau for $\varphi$ in $\mathbf{T}_{\mathcal{L}}$ is finite.
\end{theorem}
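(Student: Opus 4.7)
The plan is to reduce termination to two facts via K\"onig's lemma: that every tableau for $\varphi$ is finitely branching, and that every branch of every such tableau is finite. The first is immediate from Table~\ref{table:tab_rule_prop}, since each rule has only finitely many conclusions. The second, which is the heart of the argument, I would establish by exhibiting a finite set $\mathcal{F}(\varphi)$ of labeled formulas into which every labeled formula occurring in a tableau for $\varphi$ must fall; combined with the standing convention that no labeled formula is added to a branch in which it already occurs, this bounds the length of every branch by $|\mathcal{F}(\varphi)|$.

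To construct $\mathcal{F}(\varphi)$, let $\Phi(\varphi)$ denote the finite set of proposition symbols occurring in $\varphi$ and let $\mathrm{Sub}(\varphi)$ denote the set of its subformulas. I set
$$
\mathrm{F}(\varphi) \dfn \mathrm{Sub}(\varphi) \cup \{\,p,\, \neg p \mid p \in \Phi(\varphi)\,\},
$$
which is finite, and take
$$
\mathcal{F}(\varphi) \dfn \{\,\alpha : \psi \mid \alpha \subseteq \{1,\dots,2^{\vrank{\varphi}}\},\ \psi \in \mathrm{F}(\varphi)\,\},
$$
which is finite as well. A rule-by-rule inspection then shows that every rule of $\mathbf{T}_{\mathcal{L}}$, when applied to a premise in $\mathcal{F}(\varphi)$, yields only conclusions in $\mathcal{F}(\varphi)$: the rules $(Prop)$, $(\neg Prop)$, $(\land)$, $(\lor)$, and $(\idis)$ either replace the formula by a subformula or restrict the label to a subset of itself; the rule $(Split)$ preserves the dependence atom and only shrinks the label; and the rule $(\PL\,dep)$ introduces labeled formulas of the form $\{i\}:p_j^{g(j)}$, $\{i_1,i_2\}:q$, or $\{i_1,i_2\}:\neg q$, whose proposition symbols lie in $\Phi(\varphi)$ precisely because the premise $\dep{p_1,\dots,p_n,q}$ is a subformula of $\varphi$. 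Since the root $\{1,\dots,2^{\vrank{\varphi}}\}:\varphi$ of any tableau for $\varphi$ belongs to $\mathcal{F}(\varphi)$, an easy induction on the construction shows that every node of the tableau carries a labeled formula in $\mathcal{F}(\varphi)$.

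The no-repetition convention then forces every branch to list pairwise distinct elements of $\mathcal{F}(\varphi)$, so each branch has length at most $|\mathcal{F}(\varphi)|$ and is therefore finite; applying K\"onig's lemma to the finitely branching tree yields the desired finiteness of the whole tableau. The only mildly delicate point is the verification for $(\PL\,dep)$, where one must exploit the fact that its premise dependence atom is inherited (directly or via $(Split)$) from $\varphi$, so all atomic labeled formulas newly introduced by the rule stay inside $\mathcal{F}(\varphi)$. Termination of $\mathbf{T}_{\PL}$ and $\mathbf{T}_{\PL(\idis)}$ is an immediate special case, as those calculi omit the dependence-specific rules.
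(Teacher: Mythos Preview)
Your proof is correct and, in fact, more carefully worked out than the paper's. The paper argues termination directly by noting that ``every application of the tableau rules either decreases the size of the label or the length of the formula'' and that the $(\lor)$ rule can be applied to any given labeled formula only finitely many times; from this it concludes finiteness without further ado. Your approach instead bounds the universe of labeled formulas that can ever appear by the explicit finite set $\mathcal{F}(\varphi)$, combines this with the no-repetition convention to bound branch length, and then invokes K\"onig's lemma. The two arguments are close in spirit---both ultimately rest on the facts that labels can only shrink below the root label and that the stock of formulas is finite---but yours makes the pigeonhole step and the role of the no-repetition rule fully explicit, and it cleanly isolates why the $(\PL\,dep)$ rule requires adding $\{p,\neg p \mid p\in\Phi(\varphi)\}$ to the subformula set. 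The paper's formulation, by contrast, is terser but also vaguer: ``decreases label size or formula length'' does not by itself define a well-founded measure along a branch, so the additional bookkeeping you supply is genuinely needed to make the argument airtight.
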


\begin{proof}
Let $\mathcal{T}$ be a tableau for $\varphi$. By definition, the root of $\mathcal{T}$ is $\alpha:\varphi$, for some finite $\alpha$. Clearly every application of the tableau rules either decreases the size of the label or the length of the formula. Note also that the rule $(\lor)$ can be applied to any $\beta:\psi\in \mathcal{T}$ only finitely many times. Thus $\mathcal{T}$ must be finite.
\end{proof}
\begin{lemma}\label{branchtomodel}
If there exists a saturated open branch for $\varphi$ then $\varphi$ is not valid. 
\end{lemma}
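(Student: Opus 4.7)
The plan is to construct, from any saturated open branch $B$ for $\varphi$, a propositional team $X$ that falsifies $\varphi$; this witnesses that $\varphi$ is not valid. The construction is driven by the intended reading of labeled formulae: $\alpha:\psi$ asserts that the team indexed by $\alpha$ falsifies $\psi$. Let $I$ be the finite set of natural numbers appearing in any label on $B$. For each $i\in I$ I would define an assignment $s_i$ by setting $s_i(p)=0$ if $\{i\}:p$ is on $B$, $s_i(p)=1$ if $\{i\}:\neg p$ is on $B$, and $s_i(p)=0$ otherwise; this is well-defined because the first closure condition forbids $\{i\}:p$ and $\{i\}:\neg p$ from both occurring on an open branch. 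To make the $s_i$ pairwise distinct (needed in the dependence-atom case), I would enlarge the domain with fresh identifier propositions $r_i$ with $s_i(r_i)=1$ and $s_j(r_i)=0$ for $j\neq i$, and set $X:=\{s_i\mid i\in\alpha_0\}$ where $\alpha_0$ is the root label.

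The key claim, proved by induction on $\psi$, is that every $\beta:\psi$ on $B$ satisfies $X_\beta:=\{s_i\mid i\in\beta\}\not\models\psi$; specializing to the root then gives $X\not\models\varphi$. Atomic and negated-atomic cases use saturation of $(Prop)$, $(\neg Prop)$ to locate an $i\in\beta$ with $s_i(p)$ of the wrong value. Conjunction uses the branching rule $(\wedge)$ with the inductive hypothesis; the intuitionistic-disjunction case is immediate because $(\idis)$ places both immediate subformulae on $B$. For $\psi=\varphi_1\lor\varphi_2$, given any decomposition $X_\beta=Y\cup Z$, I set $\gamma:=\{i\in\beta\mid s_i\in Y\}$; saturation of $(\lor)$ for this particular $\gamma\subseteq\beta$ forces either $\gamma:\varphi_1$ or $\beta\setminus\gamma:\varphi_2$ onto $B$, whence the inductive hypothesis combined with Proposition~\ref{dcprop} (downwards closure) yields $Y\not\models\varphi_1$ or $Z\not\models\varphi_2$ respectively.

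The delicate case is $\psi=\dep{p_1,\dots,p_n,q}$. Labels of size $0$ or $1$ would close the branch (by closure conditions~2 and~3), and the $(Split)$ rule reduces the general situation to $|\beta|=2$. For $\beta=\{i_1,i_2\}$, saturated applications of $(\PL \, dep)$ followed by $(Prop)$ and $(\neg Prop)$ force the branch to commit to a single function $g_j$ with $s_{i_1}(p_\ell)=s_{i_2}(p_\ell)$ for every $\ell$, together with opposite values for $s_{i_1}(q)$ and $s_{i_2}(q)$ (the remaining combinations close by singleton contradictions on $q$). Distinctness of $s_{i_1}$ and $s_{i_2}$ then ensures that $\{s_{i_1},s_{i_2}\}$ is a genuine two-element team witnessing the failure of the atom, and the failure transfers to any $X_\beta$ with $|\beta|\ge 2$ since a disagreeing pair in $X_{\beta}$ remains a disagreeing pair in any superteam. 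The main obstacle is organizing the bookkeeping of $(\PL \, dep)$ precisely; the remaining cases are straightforward consequences of the falsification-driven labeling.
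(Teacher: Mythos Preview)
Your proof is correct and follows the same route as the paper's: define assignments $s_i$ from the branch data and show by induction on $\psi$ that $\beta:\psi\in\mathcal{B}$ implies $X_\beta\not\models\psi$. The paper merely declares this induction ``easy'' without spelling out the cases you supply, so your write-up is strictly more detailed.

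One remark: the fresh identifier propositions $r_i$ you introduce to force the $s_i$ pairwise distinct are unnecessary. In the dependence-atom case your own analysis already shows that on an open saturated branch containing $\{i_1,i_2\}:\dep{p_1,\dots,p_n,q}$ one must have $s_{i_1}(q)\neq s_{i_2}(q)$; hence $s_{i_1}\neq s_{i_2}$ on the original domain, and $\{s_{i_1},s_{i_2}\}$ is a genuine two-element team without any padding. The paper's construction accordingly uses only the proposition symbols occurring in $\varphi$.
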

\begin{proof}
Let $\mathcal{B}$ be a saturated open branch for $\varphi$ and let $\Phi$ be the set of proposition symbols that occur in $\varphi$. Let $\alpha : \varphi$ denote the root of the branch $\mathcal{B}$. It is easy to check that if $\beta:\psi$ is a labeled formula in $\mathcal{B}$ then $\beta\subseteq \alpha$. For each $i \in \alpha$ we define an assignment $s_i:\Phi \to \{0,1\}$ such that
\[
s_i(p)\dfn
\begin{cases}
1 &\text{if the labeled formula $\{i\}:\neg p$ occurs in the branch $\mathcal{B}$,}\\
0 &\text{otherwise.}
\end{cases}
\]
It is easy to show by induction that if a labeled formula $\beta:\psi$ occurs in the branch $\mathcal{B}$ then $X_\beta\not\models \psi$, where $X_\beta=\{s_i \mid i\in \beta \}$. Thus $\varphi$ is not valid.
\end{proof}
\begin{theorem}[Completeness of $\mathbf{T}_{\PL}$, $\mathbf{T}_{\PL(\idis)}$, and $\mathbf{T}_{\PD}$]
Let $\mathcal{L}$ be any of the logics in $\{\PL,\PL(\idis),\PD\}$. The calculus $\mathbf{T}_{\mathcal{L}}$ is complete. 
\end{theorem}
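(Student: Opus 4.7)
My plan is to establish completeness by contraposition: I will show that whenever $\varphi$ is not provable in $\mathbf{T}_{\mathcal{L}}$, it is also not valid. The two key ingredients are the termination result (Theorem \ref{termination}) and the counter-model construction (Lemma \ref{branchtomodel}), both of which apply uniformly to $\PL$, $\PL(\idis)$, and $\PD$, so a single argument will cover all three calculi.

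First, I would fix a formula $\varphi$ of $\mathcal{L}$ and construct a saturated tableau $\mathcal{T}$ for $\varphi$ by applying the rules of $\mathbf{T}_{\mathcal{L}}$ exhaustively. For the parametric rule $(\lor)$ the construction must iterate over every subset $\beta \subseteq \alpha$ of the relevant label; because each label is finite, this produces only finitely many instantiations, and the no-duplication convention blocks any redundant repetitions. Theorem \ref{termination} guarantees that the resulting $\mathcal{T}$ is finite, so this saturation procedure terminates and yields a tree in which no further rule application has any effect.

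Next, suppose for contraposition that $\vdash_{\mathbf{T}_{\mathcal{L}}}\varphi$ fails, i.e., no tableau for $\varphi$ is closed. Then the above $\mathcal{T}$ is itself open and contains some open branch $\mathcal{B}$. Because $\mathcal{T}$ is saturated, no rule extends $\mathcal{B}$ non-trivially, so $\mathcal{B}$ is a saturated open branch in the sense required by Lemma \ref{branchtomodel}. Invoking that lemma produces a propositional team that falsifies $\varphi$, witnessing that $\varphi$ is not valid. Contraposition then yields the desired implication from validity to provability.

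The genuinely technical work lives inside Lemma \ref{branchtomodel}, whose induction on labeled formulas in the branch handles each of $(Prop),(\neg Prop),(\land),(\lor)$, together with $(\idis)$ in the $\PL(\idis)$ case and $(Split),(\PL\,dep)$ in the $\PD$ case. At the level of the completeness argument itself, the only point I would pause over is the interaction between the parametric rule $(\lor)$ and saturation, because one needs every subset $\beta$ to be considered while preserving overall termination; this is consistent because each child formula produced by $(\lor)$ is strictly shorter than its premise, so the well-founded argument underpinning Theorem \ref{termination} still goes through. Beyond this bookkeeping, the completeness proof is a routine tableau contraposition.
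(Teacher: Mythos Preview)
Your proposal is correct and follows essentially the same contrapositive argument as the paper: from non-provability infer that every tableau is open, use termination (Theorem~\ref{termination}) to obtain a saturated open tableau and hence a saturated open branch, then invoke Lemma~\ref{branchtomodel} to conclude non-validity. Your additional discussion of how the parametric $(\lor)$ rule interacts with saturation and termination is extra care not spelled out in the paper, but it does not change the structure of the proof.
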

\begin{proof}
Fix $\mathcal{L}\in\{\PL,\PL(\idis),\PD\}$.
Assume $\not\vdash_{\mathbf{T}_{\mathcal{L}}} \varphi$. Thus every tableau for $\varphi$ is open. From Theorem \ref{termination} it follows that there exists a saturated open tableau for $\varphi$. Thus there exists a saturated open branch for $\varphi$. Thus, by Lemma \ref{branchtomodel}, $\not\models_{\mathcal{L}}\varphi$.
\end{proof}

\begin{definition}
Let $\mathcal{B}$ be a tableau branch and $\mathrm{Index}(\mathcal{B})$ the set of all natural numbers occurring in $\mathcal{B}$. We say that  $\mathcal{B}$ is {\em faithful} to a propositional team $X$ by a mapping $f: \mathrm{Index}(\mathcal{B}) \to X$ if, for all $\alpha: \varphi \in \mathcal{B}$, $f[\alpha] \not\models \varphi$. 
\end{definition}

\begin{lemma}
\label{modeltobranch}
Let $\mathcal{L}$ be a logic in $\{ {\PL}, {\PL(\idis)},{\PD}\}$. If $\varphi \in \mathcal{L}$ is not valid then there is an open saturated branch in every saturated tableau of $\varphi$ in $\mathbf{T}_{\mathcal{L}}$. 
\end{lemma}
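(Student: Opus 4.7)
The plan is to prove this constructively: I build an open branch in any saturated tableau by using a small team that falsifies $\varphi$ as a guide. Since $\varphi \in \mathcal{L}$ is not valid and $\mathcal{L}$ is downward closed (Proposition~\ref{dcprop}), the propositional counterpart of Theorem~\ref{coherenceml} (which for $\PL$ reduces to Proposition~\ref{PLflat}, and for $\PL(\idis)$ and $\PD$ follows by the same argument as in the modal case) gives a team $X$ on the propositions $D$ of $\varphi$ with $|X| \le 2^{\vrank{\varphi}}$ and $X \not\models \varphi$. By Theorem~\ref{termination}, any saturated tableau $\mathcal{T}$ for $\varphi$ is finite; its root is $\alpha_0 : \varphi$ with $\alpha_0 = \{1, \ldots, 2^{\vrank{\varphi}}\}$, and since $|\alpha_0| \ge |X|$ I fix a surjection $f_0 : \alpha_0 \to X$. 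This makes the root faithful to $X$ via $f_0$, since $f_0[\alpha_0] = X \not\models \varphi$.

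I then descend $\mathcal{T}$, maintaining the invariant that the currently selected labeled formula is faithful to $X$ via some partial mapping extending $f_0$, and at each rule application I pick a child that stays faithful. Each rule in Table~\ref{table:tab_rule_prop} mirrors a clause of the team-semantic definition, so faithfulness propagation is by inspection: for $(\land)$, pick the falsified conjunct; for $(\lor)$ applied with some $\beta \subseteq \alpha$, use $f[\beta] \cup f[\alpha \setminus \beta] = f[\alpha]$ together with the team semantics of $\lor$ to identify a failing side; for $(\idis)$, which is non-branching, note that $f[\alpha] \not\models (\psi_1 \idis \psi_2)$ gives $f[\alpha] \not\models \psi_j$ for both $j$; for $(Prop)$ and $(\neg Prop)$, pick $i \in \alpha$ whose image witnesses the failure on the atom; for $(Split)$, pick a 2-subset of $\alpha$ whose image witnesses the failure of the dependence atom; and for $(\PL\,dep)$, pick the column $g_j$ matching the common truth-value pattern of $f(i_1)$ and $f(i_2)$ on $p_1, \ldots, p_n$. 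Iterating yields a full faithful branch $\mathcal{B}$.

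Finally, I argue that $\mathcal{B}$ is not closed. Faithfulness excludes $\emptyset : \psi$, because $\emptyset = f[\emptyset] \models \psi$ always; it excludes $\{i\} : \dep{\vec{p}, q}$, because every singleton team satisfies every dependence atom; and it excludes, at the singleton level, both $\{i\} : p$ and $\{i\} : \neg p$ on the same branch, since that would force $f(i)(p)$ to be simultaneously $0$ and $1$. These are exactly the three closure clauses, which are therefore not triggered by a faithful branch.

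The main obstacle I foresee is the case of $(\PL\,dep)$. Each of its children conjoins many labeled formulas simultaneously, including $\{i_1, i_2\} : q$ and $\{i_1, i_2\} : \neg q$, so one must verify that a single column $g_j$ is compatible with all of them at once and that the resulting conjoined conclusions do not themselves provoke closure at the singleton level once $(Prop)$ and $(\neg Prop)$ refine them further. The fact to exploit is that faithfulness of the premise $\{i_1, i_2\} : \dep{p_1, \ldots, p_n, q}$ forces $f(i_1)$ and $f(i_2)$ to agree on every $p_l$ and to disagree on $q$; the agreement values then pin down one column $g_j$, and the disagreement on $q$ lets the subsequent $(Prop)$ and $(\neg Prop)$ refinements choose different singleton witnesses, which sidesteps closure. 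The remaining rules are routine checks of team-semantic clauses against the corresponding tableau rules.
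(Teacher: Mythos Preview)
Your approach is essentially the paper's: build a branch step by step, maintaining that every labeled formula $\alpha:\psi$ on it satisfies $f[\alpha]\not\models\psi$ for a fixed interpretation $f$ of indices as assignments, and conclude that the resulting saturated branch is open. The case analysis you give for each rule is correct and more detailed than the paper's, which spells out only the $(\lor)$ case.

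The one substantive difference is the choice of falsifying team. The paper invokes Proposition~\ref{coherencepl} to obtain $\{0,1\}^{\Phi}\not\models\varphi$ and then takes a \emph{bijection} $f:\{1,\dots,2^{|\Phi|}\}\to\{0,1\}^{\Phi}$ onto the full assignment space; you instead appeal to a propositional analogue of Theorem~\ref{coherenceml} to get a team of size at most $2^{\vrank{\varphi}}$ and a surjection from the root label $\{1,\dots,2^{\vrank{\varphi}}\}$. Your choice matches the root label actually specified in the definition of ``tableau for $\varphi$'', whereas the paper's proof silently identifies the root with $\{1,\dots,2^{|\Phi|}\}$. On the other hand, the propositional coherence statement you rely on is not proved in the paper (Theorem~\ref{coherenceml} is stated only for $\ML(\idis)$ and $\EMDL$), so you are trading one loose end for another; if you want to stay strictly within the paper's toolkit, the cleanest fix is to start from $\{0,1\}^{\Phi}$ via Proposition~\ref{coherencepl} and, since the root label may be smaller, first pass to a subteam of size at most $2^{\vrank{\varphi}}$ using downward closure together with the observation that the normal form of Proposition~\ref{normalform} has at most $2^{\vrank{\varphi}}$ $\idis$-disjuncts.

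One caution shared with the paper: your openness argument treats closure condition~1 only ``at the singleton level'', but the paper states it for arbitrary $\alpha$. Faithfulness does \emph{not} exclude both $\alpha:p$ and $\alpha:\neg p$ when $|\alpha|\ge 2$ (a team with mixed $p$-values falsifies both). The paper's own proof glosses over this in the same way, so your argument is no weaker than the original; but be aware that as literally stated the closure condition would need to be read at the singleton level for the lemma to go through.
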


\begin{proof}
Assume $\not\models_{\mathcal{L}}\varphi$. 
Let $\Phi$ be the set of all proposition symbols occurring in $\varphi$. By Proposition \ref{coherencepl}, $\{0,1\}^{\Phi} \not\models \varphi$. 
Put $\alpha$ := $\{1,\ldots, 2^{|\Phi|} \}$. We fix a bijection $f: \alpha \to \{0,1\}^{\Phi}$. 
Let $\mathcal{T}$ be an arbitrary saturated tableau for $\varphi$. 
By Theorem \ref{termination}, $\mathcal{T}$ is finite and, by definition, the root of $\mathcal{T}$ is $\alpha:\varphi$. 
Note that $\mathrm{Index}(\mathcal{B})$ = $\alpha$, for every branch $\mathcal{B}$ with the root $\alpha:\varphi$. 
We will show that there is an open saturated branch in $\mathcal{T}$. 

First, we establish that $\mathcal{B}_{0}$ := $\{ \alpha: \varphi \}$ is faithful to $\{0,1\}^{\Phi}$ by $f$. 
But, this is easy since $f[\alpha]$ = $\{0,1\}^{\Phi}$. Second, assume that we have constructed a branch $\mathcal{B}_{n}$ such that $\mathcal{B}_{n}$ is faithful to $\{0,1\}^{\Phi}$ by $f$. We will show that at least one extension of $\mathcal{B}_{n}$ by rules of $L$ is faithful to $\{0,1\}^{\Phi}$ by $f$. Here we are concerned with the rule of $(\lor)$ alone. 
Assume that, from $\beta_{1}: (\psi_{1} \lor \psi_{2}) \in \mathcal{B}_{n}$ and the rule of $(\lor)$, we obtain two extensions $\{ \beta_{2}: \psi_{1} \} \cup \mathcal{B}_{n}$ and $\{ \beta_{1} \setminus \beta_{2}: \psi_{2} \} \cup \mathcal{B}_{n}$ for $\beta_{2} \subseteq \beta_{1}$. We want to show that one of the extensions is faithful to $\{0,1\}^{\Phi}$ by $f$. By assumption, we obtain $f[\beta_{1}] \not\models (\psi_{1} \lor \psi_{2})$. By the semantic clause for $\lor$, $f[\beta_{2}] \not\models \psi_{1}$ or $f[\beta_{1}] \setminus f[\beta_{2}]  \not\models \psi_{2}$. 
Since $f[\beta_{1}] \setminus f[\beta_{2}] \subseteq f[\beta_{1} \setminus\beta_{2}]$, 
it follows from downwards closure that $f[\beta_{2}] \not\models \psi_{1}$ or $f[\beta_{1} \setminus\beta_{2}] \not\models \psi_{2}$. 
This implies that at least one of the two extensions is faithful to $\{0,1\}^{\Phi}$ by $f$. We choose one of the faithful extensions as $\mathcal{B}_{n+1}$. 

Since $\mathcal{T}$ is finite and saturated, $\mathcal{B}_{j}$ is a saturated branch in $\mathcal{T}$ for some $j \in \mathbb{N}$. Moreover, since $\mathcal{B}_{j}$ is faithful to $\{0,1\}^{\Phi}$ by $f$, $\mathcal{B}_{j}$ is open.
\end{proof}

\begin{theorem}[Soundness of $\mathbf{T}_{\PL}$, $\mathbf{T}_{\PL(\idis)}$, and $\mathbf{T}_{\PD}$]
Let $\mathcal{L}$ be any of the logics in $\{\PL,\PL(\idis),\PD\}$. The calculus $\mathbf{T}_{\mathcal{L}}$ is sound. 
\end{theorem}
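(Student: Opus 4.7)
The plan is to prove the contrapositive: if $\not\models_{\mathcal{L}}\varphi$ then $\not\vdash_{\mathbf{T}_{\mathcal{L}}} \varphi$. The bulk of the work has already been done in Lemma \ref{modeltobranch}; what remains is to bridge the gap between this lemma, which concerns \emph{saturated} tableaux, and the definition of provability, which only requires the existence of \emph{some} closed tableau (possibly not saturated).

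The bridge rests on a straightforward monotonicity observation. Each of the three closing conditions---namely, the presence on a branch of both $\alpha:p$ and $\alpha:\neg p$, the presence of $\emptyset:\psi$, or the presence of $\{i\}:\dep{p_1,\dots,p_n,q}$---asserts only that certain labeled formulas occur on the branch. Hence once a branch is closed, any extension of it by further rule applications is still closed. By Theorem \ref{termination}, any tableau can be extended to a saturated tableau in finitely many steps (the no-repetition convention and the finiteness of the labels used ensure this), and by the preceding observation this saturation preserves the closedness of every branch.

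For the proof itself, assume for contradiction that $\vdash_{\mathbf{T}_{\mathcal{L}}} \varphi$ while $\not\models_{\mathcal{L}}\varphi$. The former provides a closed tableau $\mathcal{T}$ for $\varphi$; extend $\mathcal{T}$ into a saturated tableau $\mathcal{T}'$, which, by the monotonicity observation, is still closed, i.e., every branch of $\mathcal{T}'$ is closed. But the latter, together with Lemma \ref{modeltobranch} applied to $\mathcal{T}'$, forces $\mathcal{T}'$ to contain an open saturated branch---a contradiction. Hence $\vdash_{\mathbf{T}_{\mathcal{L}}} \varphi$ implies $\models_{\mathcal{L}}\varphi$, yielding soundness.

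The argument is largely routine; if there is any subtle point it is the verification that saturation of $\mathcal{T}$ within $\mathbf{T}_{\mathcal{L}}$ terminates at a genuinely saturated tableau (as opposed to looping), which however is a direct consequence of Theorem \ref{termination} and the convention that a labeled formula is never added to a branch in which it already occurs.
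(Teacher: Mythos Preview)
Your proof is correct and follows essentially the same route as the paper: both argue the contrapositive, invoke Lemma \ref{modeltobranch} to obtain an open branch in every saturated tableau, and use Theorem \ref{termination} to bridge from arbitrary tableaux to saturated ones. Your version is simply more explicit about the monotonicity of closedness under extension, which the paper leaves implicit in its one-line ``Therefore, and since \ldots'' step.
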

\begin{proof}
Fix $\mathcal{L}\in\{\PL,\PL(\idis),\PD\}$.
Assume that $\not\models_{\mathcal{L}} \varphi$. By Lemma \ref{modeltobranch}, there is an open saturated branch in every saturated tableau of $\varphi$ in $\mathbf{T}_{\mathcal{L}}$. Therefore, and since, by Theorem \ref{termination}, every tableau of $\varphi$ in $\mathbf{T}_{\mathcal{L}}$ is finite, there does not exists any closed tableau for $\varphi$ in $\mathbf{T}_{\mathcal{L}}$. Thus $\not\vdash_{\mathbf{T}_{\mathcal{L}}}\varphi$.
\end{proof}

\subsection{Tableau Calculi for $\ML$, $\ML(\idis)$, $\MDL$, and $\EMDL$}\label{tableauforml}
%
In addition to labeled formulas, the tableau rules for modal logics contain \emph{accessibility formulas} of the form $i\mathsf{R}j$, where $i,j\in\mathbb{N}$. The intended interpretation of $i\mathsf{R}j$ is that the point denoted by $j$ is accessible by the relation $R$ from the point denoted by $i$. The tableau rules for the calculi are given in Tables 1 and 2.

In the construction of tableaus, in addition to the rules given in Section \ref{tableauforpl}, we impose that the tableau rule $(\Box)$ is never applied twice to the same labeled formula in any branch. The definitions of open, closed and saturated tableau and branch are as in Section \ref{tableauforpl} with the following addition rule: A branch is called \emph{closed} also if it contains a labeled rule $\{i\}:\dep{\varphi_1,\dots,\varphi_n,\psi}$, for some $i\in\mathbb{N}$, $\varphi_1,\dots,\varphi_n,\psi\in \ML$ and $n\in\mathbb{N}$.

Let $\mathbf{T}_{\ML}$, $\mathbf{T}_{\ML(\idis)}$, and $\mathbf{T}_{\MDL}$ denote the extensions of $\mathbf{T}_{\PL}$, $\mathbf{T}_{\PL(\idis)}$, and $\mathbf{T}_{\PD}$ by the rules $(\Diamond)$ and $(\Box)$ of Table \ref{table:tab_rule_ml}, respectively. 
Let $\mathbf{T}_{\EMDL}$ denote the extension of $\mathbf{T}_{\ML}$ by the rules $(Split)$ of Table \ref{table:tab_rule_prop} and $(\ML \, dep)$ of Table \ref{table:tab_rule_ml}.

Let $\varphi$ be a formula of $\mathcal{L}\in\{\ML,\ML(\idis),\MDL,\EMDL\}$. We say that a tableau $\mathcal{T}$ is a \emph{tableau for $\varphi$} if the root of $\mathcal{T}$ is $\{1,\dots, 2^{\vrank{\varphi}}\}:\varphi$ and $\mathcal{T}$ is obtained by applying the rules of $\mathbf{T}_{\mathcal{L}}$. We say that $\varphi$ is \emph{provable} in $\mathbf{T}_{\mathcal{L}}$ and write $\vdash_{\mathbf{T}_{\mathcal{L}}}\varphi$ if there exists a closed tableau for $\varphi$.

\begin{table}[tbp]
\begin{footnotesize}
\begin{center}
\alwaysNoLine
\AxiomC{$i_1 \mathsf{R} j_1$}
\UnaryInfC{$\vdots$}
\UnaryInfC{$i_n \mathsf{R} j_n$}
\UnaryInfC{$\{i_1,\dots,i_n\} \,:\,\Diamond \varphi$}
\alwaysSingleLine
\RightLabel{$(\Diamond)$}
\UnaryInfC{$\{j_1,\dots,j_n\} \,:\,\varphi$}
\DisplayProof
%
\quad
\AxiomC{$\alpha \,:\,\Box \varphi$}
\RightLabel{$(\Box)\dagger$}
\UnaryInfC{$ f_1(1) \mathsf{R} i_1 \,\mid\, \dots \,\mid\, f_k(1) \mathsf{R} i_1$}
\alwaysNoLine
\UnaryInfC{$ \vdots\quad\quad\quad \vdots  \quad\quad\quad\vdots$}
\UnaryInfC{$ f_1(t) \mathsf{R} i_t \,\mid\, \dots \,\mid\, f_k(t) \mathsf{R} i_t$}
\UnaryInfC{$\{i_1,\dots i_t\} \,:\, \varphi \,\mid\, \dots \,\mid\, \{i_1,\dots i_t\} \,:\, \varphi$}
\DisplayProof
\begin{prooftree}
\AxiomC{$\{i_1,i_2\} \,:\, \dep{\varphi_1,\dots,\varphi_n,\psi}$}
\RightLabel{$(\ML \, dep) \ddagger$}
\UnaryInfC{$\{i_1\} \,:\,\varphi_1^{h_1(1)} \,\mid\, \dots \,\mid\, \{i_1\} \,:\, \varphi_1^{h_k(1)}$}
\alwaysNoLine
\UnaryInfC{$\{i_2\} \,:\, \varphi_1^{h_1(1)} \,\mid\, \dots \,\mid\, \{i_2\} \,:\, \varphi_1^{h_k(1)}$}
\UnaryInfC{$ \vdots\quad\quad\quad\quad\quad \vdots  \quad\quad\quad\quad\quad\vdots$}
\UnaryInfC{$\{i_1\} \,:\, \varphi_n^{h_1(n)} \,\mid\, \dots \,\mid\, \{i_1\} \,:\, \varphi_n^{h_k(n)}$}
\UnaryInfC{$\{i_2\} \,:\, \varphi_n^{h_1(n)} \,\mid\, \dots \,\mid\, \{i_2\} \,:\, \varphi_n^{h_1(n)}$}
\UnaryInfC{$\{i_1,i_2\} \,:\, \psi \,\mid\, \dots \,\mid\, \{i_1,i_2\} \,:\, \psi$}
\UnaryInfC{$\{i_1,i_2\} \,:\, \psi^\bot \,\mid\, \dots \,\mid\, \{i_1,i_2\} \,:\, \psi^\bot$}
\end{prooftree}
$\dagger$: $t=2^{\vrank{\varphi}}$ and $f_1,\dots,f_k$ denote exactly all functions with domain $\{1,\dots, t\}$ and co-domain $\alpha$, and $i_1,\dots,i_t$ are fresh and distinct.\\
$\ddagger$: $h_1, \dots h_k$ denotes all the functions with domain $\{1,\dots, n\}$ and co-domain $\{ \top, \bot \}$. 
\end{center}
\end{footnotesize}
\caption{Additional Tableau Rules for $\mathbf{T}_\ML$, $\mathbf{T}_{\ML(\idis)}$, $\mathbf{T}_\MDL$ and $\mathbf{T}_\EMDL$ }
\label{table:tab_rule_ml}
\end{table}

\begin{definition}
Let $\mathcal{B}$ be a branch of a tableau and let $\alpha : \varphi$ be the root of $\mathcal{B}$. Recall that $\mathrm{Index}(\mathcal{B})$ denotes the set of all natural numbers that occur in $\mathcal{B}$. For $i,j\in\mathrm{Index}(\mathcal{B})$, we write $i \prec_{\mathcal{B}} j$ if $i R j$ occurs in $\mathcal{B}$. By $\prec_{\mathcal{B}}^{\ast}$ and $\preceq_{\mathcal{B}}^{\ast}$, 
we mean the transitive closure and the reflexive and transitive closure of $\prec_{\mathcal{B}}$, respectively. 
Moreover, we define
\begin{align*}
&\mathrm{Level}_{\mathcal{B}}(i) \dfn \lvert\{j\in\mathrm{Index}(\mathcal{B}) \mid i_0 \prec_{\mathcal{B}}^{\ast} j \preceq_{\mathcal{B}}^{\ast} i, \text{ for some $i_0\in \alpha$}\} \rvert,\\
&\mathrm{Layer}_{\mathcal{B}}(n) \dfn \{j\in \mathrm{Index}(\mathcal{B}) \mid  \mathrm{Level}_{\mathcal{B}}(j)=n\}.
\end{align*}
\end{definition}
It is easy to see that, for every branch $\mathcal{B}$, the graph $(\mathrm{Index}(\mathcal{B}),\prec_{\mathcal{B}})$ is a well-founded forest.
%
\begin{theorem}[Termination of $\mathbf{T}_{\ML}$, $\mathbf{T}_{\ML(\idis)}$, $\mathbf{T}_{\MDL}$, and $\mathbf{T}_{\EMDL}$]\label{terminationml}
Let $\varphi$ be a formula of $\ML$, $\ML(\idis)$, $\MDL$, or $\EMDL$. Every tableau for $\varphi$ is finite.
\end{theorem}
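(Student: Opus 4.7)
The plan is to adapt the argument for Theorem~\ref{termination} by adding one essential ingredient: finiteness of the set $\mathrm{Index}(\mathcal{B})$ of indices appearing in any branch $\mathcal{B}$ of a tableau for $\varphi$. Once this is established, labels on $\mathcal{B}$ are subsets of a finite set and so form a finite collection, and the formulas that can appear on $\mathcal{B}$ all lie in the finite closure of the subformulas of $\varphi$ under $\chi \mapsto \chi^{\bot}$ and under the expansions prescribed by $(Split)$ and $(\ML\,dep)$. The set of accessibility assertions $i\mathsf{R}j$ is also finite. Combined with the no-repetition restriction on the tableau construction and the finite branching of every rule, König's lemma then forces every tableau for $\varphi$ to be a finite tree.

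To bound $\mathrm{Index}(\mathcal{B})$ I would exploit the forest structure $(\mathrm{Index}(\mathcal{B}),\prec_{\mathcal{B}})$. Fresh indices enter a branch only via the rule $(\Box)$, and its side-condition permits at most one application per labeled formula. I would prove by induction on $n$ that each layer $\mathrm{Layer}_{\mathcal{B}}(n)$ is finite and that every labeled formula $\beta:\psi$ in $\mathcal{B}$ with $\beta\subseteq \bigcup_{k\leq n}\mathrm{Layer}_{\mathcal{B}}(k)$ has $\psi$ ranging over a finite set of subformulas of $\varphi$ and of their $^{\bot}$-duals. Along any chain $i_0 \prec_{\mathcal{B}} i_1 \prec_{\mathcal{B}} i_2 \prec_{\mathcal{B}} \cdots$ in the forest, the formula $\Box\chi_{k}$ that triggers the creation of $i_{k+1}$ occurs in a labeled formula whose label contains $i_{k}$, and such a formula must be a subformula of the formula $\chi_{k-1}$ (or of one of its $(\ML\,dep)$-expansions) that originally introduced $i_{k}$. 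Consequently the modal depth of $\Box\chi_{k}$ strictly decreases along the chain, so the depth of the forest is bounded by the modal depth of $\varphi$. Each layer being finite by the inductive hypothesis, $\mathrm{Index}(\mathcal{B})$ is finite.

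The remaining rules cause no further difficulty: $(\Diamond)$ reuses existing indices and yields a labeled formula of strictly smaller modal depth; $(Split)$ and $(\ML\,dep)$ act only on labels and formulas, leaving $\mathrm{Index}(\mathcal{B})$ unchanged; and the purely propositional rules behave exactly as in the proof of Theorem~\ref{termination}, each application either strictly decreasing the size of the label or the length of the formula. The rule $(\lor)$, which is the only one that can be applied repeatedly to a given labeled formula, can do so only finitely often because there are only finitely many sub-labels of a finite label.

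The main obstacle is the layered induction used to bound $\mathrm{Index}(\mathcal{B})$. What needs careful verification there is that the $(\ML\,dep)$-expansions, which introduce formulas of the shape $\chi^{\bot}$, do not raise the modal depth associated with any given index; this goes through because $\chi$ and $\chi^{\bot}$ have the same modal depth, so the strict descent of modal depth along chains of the forest is preserved. Once this is checked, termination for $\mathbf{T}_{\ML}$, $\mathbf{T}_{\ML(\idis)}$, $\mathbf{T}_{\MDL}$ and $\mathbf{T}_{\EMDL}$ all follow uniformly.
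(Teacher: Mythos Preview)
Your overall strategy coincides with the paper's: reduce to finiteness of each branch via K\"onig's lemma, establish that by showing $\mathrm{Index}(\mathcal{B})$ is finite, and do the latter by proving each layer finite and the number of layers bounded. The paper bounds the number of layers with a lexicographic measure whose leading component is essentially the maximal formula length occurring at a given layer; your use of modal depth plays exactly the same role and is arguably more transparent.

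There is, however, one imprecision in your chain argument. You assert that $\Box\chi_k$ must be a subformula of the formula $\chi_{k-1}$ that the earlier $(\Box)$-step introduced together with $i_k$. This need not hold: a labeled formula $\beta_k:\Box\chi_k$ at level $k$ with $i_k\in\beta_k$ can also arise via $(\Diamond)$ applied to some $\gamma:\Diamond\theta$ at level $k-1$, using an accessibility fact $j\mathsf{R}i_k$ produced by that earlier $(\Box)$-step; in this case $\Box\chi_k$ is a subformula of $\theta$ rather than of $\chi_{k-1}$, so the comparison $\mathrm{md}(\Box\chi_k)<\mathrm{md}(\Box\chi_{k-1})$ is not directly available. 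The repair is immediate and does not change your plan: instead of tracking modal depth along a single chain, let $d(k)$ be the maximal modal depth of any $\psi$ with $\beta:\psi\in\mathcal{B}$ and $\beta\subseteq\mathrm{Layer}_{\mathcal{B}}(k)$. Every labeled formula at level $k$ is introduced from level $k-1$ by $(\Box)$ or $(\Diamond)$ (each strictly lowering modal depth) and then processed only by rules that do not raise it (including $(\ML\,dep)$, since $\chi$ and $\chi^{\bot}$ have equal modal depth). Hence $d(k)<d(k-1)$ whenever level $k$ is non-empty, bounding the number of layers by the modal depth of $\varphi$.
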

\begin{proof}
Let $\mathcal{T}$ be a tableau for $\varphi$ and let $\alpha : \varphi$ denote the root of $\mathcal{T}$. By definition $\alpha$ is finite. Clearly, by the definitions of the tableau rules, if $\beta : \psi$ occurs in $\mathcal{T}$ then $\lvert\beta\rvert \leq \lvert \alpha \rvert$. From this and from the definitions of the tableau rules, it is easy to see that $\mathcal{T}$ is a finitely branching tree. Thus from K\"onig's lemma it follows that $\mathcal{T}$ is infinite if and only if $\mathcal{T}$ has an infinite branch.
%
%

Let $\mathcal{B}$ be an arbitrary branch of $\mathcal{T}$. We will show that $\mathcal{B}$ is finite.

\vspace{1mm}
\noindent \textit{Claim 1} If $\alpha : \varphi$ occurs in $\mathcal{B}$ then, for every $i,j\in\alpha$, $\mathrm{Level}_{\mathcal{B}}(i)=\mathrm{Level}_{\mathcal{B}}(j)$.

\vspace{1mm}
\noindent \textit{Claim 2} For each $k\in\mathbb{N}$ the set $\mathrm{Layer}_{\mathcal{B}}(k)$ is finite.

\vspace{1mm}
\noindent \textit{Claim 3} There is a $k\in\mathbb{N}$ such that $\mathrm{Layer}_{\mathcal{B}}(k)=\emptyset$.

\vspace{1mm}
Note first that if $\mathrm{Layer}_{\mathcal{B}}(k)=\emptyset$ then $\mathrm{Layer}_{\mathcal{B}}(n)=\emptyset$, for every $n\geq k$. Thus from Claims 2 and 3 it follows that only finitely many labels occur in $\mathcal{B}$. Note also that, for every labeled formula $\beta : \psi$ that occurs in $\mathcal{B}$, $\psi$ is either a subformula of $\varphi$ or a subformula of some $\theta^\bot$, where $\theta$ is an $\ML$ subformula of $\varphi$. Thus only finitely many formulas occur in $\mathcal{B}$. Thus $\mathcal{B}$ is finite.

Proof of Claim 1 is easy. We will sketch the proofs of Claims 2 and 3.

\vspace{1mm}
\noindent \textit{Proof sketch of Claim 2}.
Claim 2 follows from Claim 1 by induction: Clearly $\mathrm{Layer}_{\mathcal{B}}(0)$ is finite. $\mathrm{Layer}_{\mathcal{B}}(k+1)$ is generated via applications of the tableau rule $(\Box)$ to labeled formulae $\beta:\Box \psi$ of the branch $\mathcal{B}$, where $\beta\subseteq \mathrm{Layer}_{\mathcal{B}}(k)$ and $\Box \psi$ is either a subformula of $\varphi$ or a subformula of some $\theta^\bot$, where $\theta$ is an $\ML$ subformula of $\varphi$.

\vspace{1mm}
\noindent \textit{Proof sketch of Claim 3}.
%
For finite labels $\beta$, define
\[
m_\mathcal{B}(\beta) \dfn \mathrm{max}\{\lvert \varphi \rvert \mid \beta_1 : \varphi \text{ occurs in $\mathcal{B}$ and $\beta_1\cap\beta\neq \emptyset$}\}.
\]
For finite labels $\beta$, define $M_\mathcal{B}(\beta : \psi) \dfn (m_\mathcal{B}(\beta), \lvert \psi \rvert, \lvert \beta \rvert)$. The ordering between the tuples if defined as follows:
\[
(i,j,k)<(k,l,m) \text{ iff } i<k \text{ or } (i=k \text{ and } j<l) \text{ or } (i=k \text{ and } j=l \text{ and } k<m).
\]
Note that for every labeled formula $\beta : \psi$ that occurs in $\mathcal{B}$ it holds that $m_\mathcal{B}(\beta)< m_\mathcal{B}(\alpha)$, $\lvert \psi\rvert \leq \lvert \varphi \rvert$ and $\lvert \beta\rvert \leq \lvert \alpha \rvert$. Thus the ordering of the tuples is well-founded. Furthermore it is easy to check that application of each tableau rule decreases the measure $M_\mathcal{B}$. For finite collections of labeled formulas $\Gamma$, define $\mathcal{M}_\mathcal{B}(\Gamma) \dfn \mathrm{max}\{M_\mathcal{B}(\beta:\psi) \mid \beta:\psi\in \Gamma \}$. It is straightforward to show that, for every $k\in\mathbb{N}$, either $\mathcal{M}_\mathcal{B}\big(\mathrm{Layer}_{\mathcal{B}}(k+1)\big)<\mathcal{M}_\mathcal{B}\big(\mathrm{Layer}_{\mathcal{B}}(k)\big)$ or $\mathrm{Layer}_{\mathcal{B}}(k+1)=\emptyset$. From this the claim follows. 
\end{proof}

\begin{definition}
Let $\mathcal{B}$ be a tableau branch.
We say that $\mathcal{B}$ is {\em faithful} to a Kripke model $\mathrm{K}$  = $(W,R,V)$ if there exists a mapping $f: \mathrm{Index}(\mathcal{B}) \to W$ such that, 
$\mathrm{K}, f[\alpha] \not\models \varphi$ for all $\alpha: \varphi \in \mathcal{B}$, and 
$f(i)Rf(j)$ holds, for every $i\mathsf{R}j \in \mathcal{B}$. 
\end{definition}

\begin{lemma}
\label{modeltobranchml}
Let $\mathcal{L} \in \{ {\ML}, \ML(\idis), {\MDL}, {\EMDL}\}$. If $\varphi \in \mathcal{L}$ is not valid then there is an open saturated branch in every saturated tableau of $\varphi$ in $\mathbf{T}_{\mathcal{L}}$. 
\end{lemma}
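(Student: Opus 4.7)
The proof extends Lemma \ref{modeltobranch} by adapting faithfulness to Kripke models and by handling the modal and accessibility rules; I use Corollary \ref{cor:coherence} to start from a small-team counterexample and then replay the inductive construction. Concretely, I assume $\not\models_{\mathcal{L}} \varphi$ and apply Corollary \ref{cor:coherence} to obtain a Kripke model $\mathrm{K} = (W, R, V)$ and a team $T$ of $\mathrm{K}$ with $|T| \leq 2^{\vrank{\varphi}}$ such that $\mathrm{K}, T \not\models \varphi$. Setting $\alpha := \{1, \dots, 2^{\vrank{\varphi}}\}$, I fix a surjection $f_0 : \alpha \to T$ (possible since $|T| \le |\alpha|$, with repetitions allowed). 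The initial branch $\mathcal{B}_0 = \{\alpha : \varphi\}$ of any saturated tableau $\mathcal{T}$ for $\varphi$ is then faithful to $\mathrm{K}$ by $f_0$, since $f_0[\alpha] = T$ and $\mathcal{B}_0$ contains no accessibility atoms. By Theorem \ref{terminationml} $\mathcal{T}$ is finite, so it suffices to maintain faithfulness along a sequence of one-step extensions $\mathcal{B}_0 \subsetneq \mathcal{B}_1 \subsetneq \cdots$; the sequence must terminate at a saturated branch which, being faithful, cannot be closed.

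The non-modal rules reduce to the argument of Lemma \ref{modeltobranch}: $(Prop)$, $(\neg Prop)$, $(\land)$, $(\idis)$, $(Split)$, $(\PL\,dep)$, and $(\ML\,dep)$ follow directly from the team-semantic clauses, while $(\lor)$ once again requires downwards closure, now via Proposition \ref{dcml}, to argue that one of the two extensions remains faithful under the same $f_n$. The case $(\Diamond)$ is also straightforward: from the inductive hypothesis $\mathrm{K}, f_n[\{i_1, \dots, i_n\}] \not\models \Diamond \psi$ and the accessibility atoms $i_\ell \mathsf{R} j_\ell \in \mathcal{B}_n$ (which force $f_n(i_\ell) R f_n(j_\ell)$), the successor team $\{f_n(j_1), \dots, f_n(j_n)\}$ satisfies $f_n[\{i_1,\dots,i_n\}] [R] \{f_n(j_1),\dots,f_n(j_n)\}$, so the semantic clause for $\Diamond$ gives $\mathrm{K}, f_n[\{j_1, \dots, j_n\}] \not\models \psi$ and setting $f_{n+1} := f_n$ preserves faithfulness.

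The main obstacle is $(\Box)$, which simultaneously introduces $t = 2^{\vrank{\psi}}$ fresh indices and branches $k$ ways, one per function from $\{1, \dots, t\}$ to the parent label $\alpha'$; both the branch and the extension of $f_n$ to the new indices must be chosen consistently. Suppose the rule is applied to $\alpha' : \Box \psi \in \mathcal{B}_n$. By faithfulness $\mathrm{K}, f_n[\alpha'] \not\models \Box \psi$, so $\mathrm{K}, R[f_n[\alpha']] \not\models \psi$, and by Corollary \ref{cor:coherence} applied to $\psi$ there exists $U \subseteq R[f_n[\alpha']]$ with $|U| \leq t$ and $\mathrm{K}, U \not\models \psi$. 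Enumerate $U$ as $u_1, \dots, u_t$ (with repetitions if $|U| < t$); for each $s$ pick some $w_s \in f_n[\alpha']$ with $w_s R u_s$ and some $j_s \in \alpha'$ with $f_n(j_s) = w_s$, and let $f_\ell$ be the function in the rule determined by $f_\ell(s) := j_s$. Define $f_{n+1}$ to be $f_n$ extended by $f_{n+1}(i_s) := u_s$. The branch corresponding to $f_\ell$ is then faithful under $f_{n+1}$: each new accessibility atom $f_\ell(s) \mathsf{R} i_s$ holds because $f_{n+1}(f_\ell(s)) = w_s R u_s = f_{n+1}(i_s)$, and the new labeled formula $\{i_1, \dots, i_t\} : \psi$ is faithful because $f_{n+1}[\{i_1, \dots, i_t\}] = U$ and $\mathrm{K}, U \not\models \psi$. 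With this step in hand, the induction terminates at the desired open saturated branch.
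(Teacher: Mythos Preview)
Your proof is correct and follows essentially the same approach as the paper's own argument: start from a small-team countermodel via coherence, maintain faithfulness of a branch to that model inductively through each rule, and handle $(\Box)$ by invoking coherence of $\psi$ to pick a small falsifying subteam of $R[f_n[\alpha']]$ and then select the matching branch and extend the index map accordingly. One small slip: in the $(\Box)$ step you should cite Theorem~\ref{coherenceml} (coherence for arbitrary teams) rather than Corollary~\ref{cor:coherence}, since the latter only characterizes validity and does not by itself guarantee that the small falsifying team $U$ lies inside $R[f_n[\alpha']]$.
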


\begin{proof}
In this proof, we focus on $\ML(\idis)$. Assume that $\varphi \in \ML(\idis)$ is not valid. By Corollary \ref{cor:coherence}, there is a Kripke model $\mathrm{K}$ = $(W,R,V)$ and a team $T$ of $\mathrm{K}$ such that $\vert T \rvert \leq 2^{\vrank{\varphi}}$ and $\mathrm{K}, T \not\models \varphi$. Put $\alpha_{0}$ := $\{1,\ldots, 2^{\vrank{\varphi}} \}$. Let $\mathcal{T}$ be an arbitrary saturated tableau for $\varphi$. By Theorem \ref{terminationml}, $\mathcal{T}$ is finite and, by definition, the root of $\mathcal{T}$ is $\alpha_{0}:\varphi$. 
We will show that there is an open branch $\mathcal{B}$ in $\mathcal{T}$. 

We first establish that $\mathcal{B}_{0}$ := $\{ \alpha_{0}: \varphi \}$ is faithful to $\mathrm{K}$. 
Let $f: \alpha_{0} \to W$ be any mapping (note: $W$ is non-empty) such that $f[\alpha_{0}]$ = $T$. Clearly $\mathrm{K}, f[\alpha_{0}] \not\models \varphi$, and thus $\mathcal{B}_{0}$ is faithful to $\mathrm{K}$.
Assume then that we have constructed a branch $\mathcal{B}_{n}$ such that $\mathcal{B}_{n}$ is faithful to $\mathrm{K}$. Thus there is a mapping $g:\mathrm{Index}(\mathcal{B}_n)\to W$ such that, for all $\beta:\psi \in \mathcal{B}_{n}$, $\mathrm{K}, g[\beta] \not\models \psi$, and, for all $i\mathsf{R}j \in \mathcal{B}_{n}$, $g(i)Rg(j)$ holds. We will show that any rule-application to $\mathcal{B}_{n}$ generates at least one faithful extension $\mathcal{B}_{n+1}$ to $\mathrm{K}$. Here we are concerned with the rules of $(\Diamond)$ and  $(\Box)$ alone. 
\begin{itemize}
\item[$(\Diamond)$] Assume that $\{i_1,\dots,i_k\}: \Diamond  \psi, i_1 \mathsf{R} j_1, \ldots, i_k \mathsf{R} j_k \in \mathcal{B}_{n}$. Let $\alpha \dfn \{i_1,\dots,i_k\}$ and $\beta \dfn \{j_1,\dots,j_k\}$.
We obtain from our assumption that $\mathrm{K}, g[\alpha] \not\models \Diamond \psi$ and $g[\alpha][R]g[\beta]$. From the semantics of $\Diamond$ it follows that $\mathrm{K}, g[\beta] \not\models \psi$.
Thus $\mathcal{B}_{n+1}$ := $\mathcal{B}_{n} \cup  \{\beta : \psi \}$ is faithful to $\mathrm{K}$. Clearly $\mathcal{B}_{n+1}$ is an extension of $\mathcal{B}$ by the rule $(\Diamond)$.
%
%
%
%
%
\item[$(\Box)$]
Assume that $\alpha: \Box \psi \in \mathcal{B}_{n}$.
We obtain from our assumption that $\mathrm{K}, g[\alpha] \not\models \Box \psi$. By the semantics of $\Box$, it follows that $\mathrm{K}, R[g[\alpha]] \not\models \psi$.
Now, by Theorem \ref{coherenceml}, there exists a team $S \subseteq R[g[\alpha]]$ such that $0<\lvert S\rvert \leq 2^{\vrank{\psi}}$ and $\mathrm{K},S \not\models \psi$.
Fix such $S  \subseteq R[g[\alpha]]$ and let $u_{1}, \ldots, u_{m}$ be the elements of $S$. Since $S  \subseteq R[g[\alpha]]$ there exists a function $h: \{ 1,\ldots, m \} \to \alpha$ such that $g\big(h(l)\big) R u_{l}$, for each $l\leq m$. Let $h':\{ 1,\ldots, 2^{\vrank{\psi}} \} \to \alpha$ denote the expansion of $h$ defined such that $h'(l) \dfn h(m)$ for $m < l \leq  2^{\vrank{\psi}}$.
We then extend our function $g$ to a mapping $g'$ to cover new fresh indexes $\beta \dfn \{j_1,\dots, j_{2^{\vrank{\psi}}}\}$. We define that $g'(j_l)\dfn u_l$, for $l\leq m$, and $g'(j_l)\dfn u_m$ for $m<l\leq 2^{\vrank{\psi}}$.
By construction, we obtain that $\mathrm{K}, g'[\beta] \not\models \psi$ and $g'(h'(l))Rg'(j_{l})$ for all $1 \leq l \leq 2^{\vrank{\psi}}$. Therefore, together with our assumption, $\mathcal{B}_{n+1}$ := $\mathcal{B}_{n} \cup \{ h'(1)\mathsf{R}j_{1}, \ldots, h'(2^{\vrank{\psi}})\mathsf{R}j_{2^{\vrank{\psi}}}, \beta : \psi \}$ is faithful to $\mathrm{K}$ by $g'$. Clearly $\mathcal{B}_{n+1}$ is an extension of $\mathcal{B}$ by the rule $(\Box)$.
\end{itemize}

Since $\mathcal{T}$ is finite and saturated, $\mathcal{B}_{j}$ is a saturated branch in $\mathcal{T}$ for some $j \in \mathbb{N}$. Moreover, since $\mathcal{B}_{j}$ is faithful to $\mathrm{K}$, $\mathcal{B}_{j}$ is open. 
\end{proof}

\begin{theorem}[Soundness of $\mathbf{T}_{\ML}$, $\mathbf{T}_{\ML(\idis)}$, $\mathbf{T}{\MDL}$, and $\mathbf{T}{\EMDL}$]
Let $\mathcal{L}\in\{\ML,\ML(\idis),\MDL, \EMDL \}$, the calculus $\mathbf{T}_{\mathcal{L}}$ is sound. 
\end{theorem}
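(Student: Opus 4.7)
The plan is to mirror exactly the soundness argument used for the propositional calculi at the end of Section \ref{tableauforpl}, with the obvious substitution of Theorem \ref{terminationml} for Theorem \ref{termination} and of Lemma \ref{modeltobranchml} for Lemma \ref{modeltobranch}. The argument is by contraposition: fix $\mathcal{L}\in\{\ML,\ML(\idis),\MDL,\EMDL\}$ and suppose $\not\models_{\mathcal{L}}\varphi$.

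First, I would invoke Theorem \ref{terminationml} to guarantee that any process of applying the rules of $\mathbf{T}_{\mathcal{L}}$ starting from the root $\{1,\dots,2^{\vrank{\varphi}}\}:\varphi$ terminates in a finite saturated tableau $\mathcal{T}$ for $\varphi$. Second, I would apply Lemma \ref{modeltobranchml} to $\mathcal{T}$: since $\varphi$ is not valid in $\mathcal{L}$, the lemma supplies an open saturated branch $\mathcal{B}$ in $\mathcal{T}$, so $\mathcal{T}$ itself is not closed. To conclude that \emph{no} closed tableau for $\varphi$ exists, observe that a hypothetical closed tableau $\mathcal{T}_{0}$ could be extended by further rule applications to a saturated tableau $\mathcal{T}_{0}'$; every branch of $\mathcal{T}_{0}'$ is an extension of a closed branch of $\mathcal{T}_{0}$ and therefore remains closed, so $\mathcal{T}_{0}'$ is a saturated closed tableau, contradicting Lemma \ref{modeltobranchml}. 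Hence $\not\vdash_{\mathbf{T}_{\mathcal{L}}}\varphi$.

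There is essentially no new mathematical content to supply here; the two serious results, Lemma \ref{modeltobranchml} (which already handles the delicate cases for $(\Diamond)$ and $(\Box)$ using $n$-coherence from Corollary \ref{cor:coherence}) and Theorem \ref{terminationml}, do all the work. The only subtle point is the final observation that closedness is preserved under extending a tableau by further rule applications, which is immediate from the definition of a closed branch (its closing pair of labeled formulas or closing singleton-dependence-atom is never removed when the tree is grown). So the main obstacle is simply citing the correct lemmas and matching the pattern of the propositional proof; no case analysis over $\mathcal{L}$ is needed since Lemma \ref{modeltobranchml} and Theorem \ref{terminationml} are stated uniformly for all four logics.
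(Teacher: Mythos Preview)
Your proposal is correct and follows exactly the paper's approach: contraposition, then invoke Lemma \ref{modeltobranchml} and Theorem \ref{terminationml} to rule out any closed tableau. You merely make explicit the step the paper leaves implicit, namely that a hypothetical closed tableau can be extended to a saturated one that is still closed, contradicting Lemma \ref{modeltobranchml}.
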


\begin{proof}
Fix $\mathcal{L}\in\{\ML,\ML(\idis),\MDL, \EMDL \}$. 
Assume that $\not\models_{\mathcal{L}} \varphi$. By Lemma \ref{modeltobranchml}, there is an open saturated branch in every saturated tableau of $\varphi$ in $\mathbf{T}_{\mathcal{L}}$. Therefore, and since, by Theorem \ref{terminationml}, every tableau of $\varphi$ in $\mathbf{T}_{\mathcal{L}}$ is finite, there does not exists any closed tableau for $\varphi$ in $\mathbf{T}_{\mathcal{L}}$. Thus $\not\vdash_{\mathbf{T}_{\mathcal{L}}}\varphi$.
\end{proof}

\begin{lemma}
\label{branchtomodelml}
Let $\mathcal{L}\in\{\ML,\ML(\idis),\MDL, \EMDL \}$. 
If there exists an open saturated branch for $\varphi$ in $\mathbf{T}_{\mathcal{L}}$ then $\varphi$ is not valid. 
\end{lemma}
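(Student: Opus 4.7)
The plan is to lift the strategy of Lemma~\ref{branchtomodel} to the modal setting. Given an open saturated branch $\mathcal{B}$ for $\varphi$ in $\mathbf{T}_{\mathcal{L}}$ with root $\alpha_{0}:\varphi$, I would construct a Kripke model $\mathrm{K}=(W,R,V)$ directly from $\mathcal{B}$ by setting $W:=\mathrm{Index}(\mathcal{B})$, $R:=\{(i,j)\mid i\mathsf{R}j\in\mathcal{B}\}$, and for every proposition symbol $p$ occurring in $\varphi$, $V(p):=\{i\in W\mid \{i\}:\neg p\in\mathcal{B}\}$. The central claim, which I would prove by induction on the formula-length measure $|\cdot|$ from the preliminaries (so that $|\psi^{\bot}|=|\psi|$), is that for every labeled formula $\beta:\psi$ occurring in $\mathcal{B}$, $\mathrm{K},\beta\not\models\psi$. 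Applying the claim to the root gives $\mathrm{K},\alpha_{0}\not\models\varphi$, so $\varphi$ is not valid.

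The atomic cases reuse the argument of Lemma~\ref{branchtomodel}: saturation of $(Prop)$ and $(\neg Prop)$ passes to a singleton witness $\{i\}\subseteq\beta$, openness of $\mathcal{B}$ rules out the contradictory singleton labeled formula, and the definition of $V(p)$ directly supplies a point in $\beta$ refuting the atom. The propositional connective cases $(\land), (\lor), (\idis)$ go exactly as in Lemma~\ref{branchtomodel}, with $(\lor)$ leaning on downward closure of $\mathcal{L}$ (Proposition~\ref{dcml}) to lift the IH-failure at $\beta\setminus\beta'$ up to any $T_{2}\supseteq\beta\setminus\beta'$ appearing in an arbitrary splitting $T_{1}\cup T_{2}=\beta$ witnessing satisfaction.

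The genuinely new cases are $\Diamond$ and $\Box$. For $\{i_{1},\dots,i_{n}\}:\Diamond\psi\in\mathcal{B}$, the idea is to argue by contradiction: assuming $\mathrm{K},\{i_{1},\dots,i_{n}\}\models\Diamond\psi$ via some witnessing team $T'$ with $\{i_{1},\dots,i_{n}\}[R]T'$ and $\mathrm{K},T'\models\psi$, one may pick for each $i_{k}$ an $R$-successor $j_{k}\in T'$; since by construction $R$-edges in $\mathrm{K}$ come exactly from accessibility atoms in $\mathcal{B}$, saturation of $(\Diamond)$ forces $\{j_{1},\dots,j_{n}\}:\psi\in\mathcal{B}$; the IH yields $\mathrm{K},\{j_{1},\dots,j_{n}\}\not\models\psi$, contradicting $\mathrm{K},T'\models\psi$ via downward closure applied to $\{j_{1},\dots,j_{n}\}\subseteq T'$. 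For $\alpha:\Box\psi\in\mathcal{B}$, saturation of $(\Box)$ places into $\mathcal{B}$, for some $f:\{1,\dots,t\}\to\alpha$ with $t=2^{\vrank{\psi}}$, the formulas $f(l)\mathsf{R}i_{l}$ for $l\leq t$ together with $\{i_{1},\dots,i_{t}\}:\psi$; hence $\{i_{1},\dots,i_{t}\}\subseteq R[\alpha]$, and the IH combined with the contrapositive of downward closure yields $\mathrm{K},R[\alpha]\not\models\psi$, i.e., $\mathrm{K},\alpha\not\models\Box\psi$.

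Finally, for a dependence-atom labeled formula $\beta:\dep{\chi_{1},\dots,\chi_{n},\psi}$ (relevant in $\mathbf{T}_{\MDL}$ and $\mathbf{T}_{\EMDL}$), saturation of $(Split)$ reduces to a two-element sublabel $\{i_{1},i_{2}\}\subseteq\beta$, and saturation of $(\PL\,dep)$ or $(\ML\,dep)$ combined with the IH applied to each $\chi_{l}$, to $\psi$, and to $\psi^{\bot}$ witnesses that the singletons $\{i_{1}\}$ and $\{i_{2}\}$ agree on every $\chi_{l}$ while disagreeing on $\psi$; downward closure then lifts the refutation from $\{i_{1},i_{2}\}$ back up to $\beta$. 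I expect the $(\Box)$ case together with the bookkeeping surrounding the fresh indices generated by $(\Box)$ and the dependence-atom rules to be the main technical obstacle, but conceptually the proof is a straightforward modal generalization of Lemma~\ref{branchtomodel}, with Proposition~\ref{dcml} providing the downward-closure glue that bridges labels occurring in the branch and arbitrary teams occurring in the semantic clauses.
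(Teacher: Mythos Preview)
Your proposal is correct and follows the same approach as the paper: define $\mathrm{K}_{\mathcal{B}}$ on $\mathrm{Index}(\mathcal{B})$ with $R$ read off from the accessibility atoms and $V(p)=\{i\mid\{i\}:\neg p\in\mathcal{B}\}$, then show by induction on $\chi$ that every $\beta:\chi\in\mathcal{B}$ satisfies $\mathrm{K}_{\mathcal{B}},\beta\not\models\chi$. The only minor deviation is in the $(\Box)$ step, where you invoke the contrapositive of downward closure (Proposition~\ref{dcml}) rather than coherence (Theorem~\ref{coherenceml}) to lift the failure from $\{i_{1},\ldots,i_{t}\}$ to $R[\alpha]$; your version is in fact the sharper observation, since only downward closure is needed in that direction.
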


\begin{proof}
Let $\mathcal{B}$ be an open saturated branch in a tableau $\mathcal{T}$ of $\mathbf{T}_{\mathcal{L}}$ starting with $\{1, \ldots, 2^{\vrank{\varphi}}\} :\varphi$. Define the induced Kripke model $\mathrm{K}_{\mathcal{B}}$ = $(W,R,V)$ from $\mathcal{B}$ as follows: 
$W$ := $\mathrm{Index}(\mathcal{B})$; 
$iRj$ iff $i\mathsf{R}j \in \mathcal{B}$; $V(p)$ := $\left\{ i \,|\,\{ i\} : \neg p \in \mathcal{B} \right\}$ for any $p$ occurring in $\mathcal{B}$, otherwise, $V(p)$ := $\emptyset$. 
It is straightforward to prove by induction on $\chi$ that
\(
\alpha: \chi \in \mathcal{B} \text{ implies }\mathrm{K}_{\mathcal{B}}, \alpha \not\models \chi. 
\)
Since $\{1, \ldots, 2^{\vrank{\varphi}} \}: \varphi \in \mathcal{B}$, it follows that $\mathrm{K}_{\mathcal{B}}, \{1, \ldots, 2^{\vrank{\varphi}} \} \not\models \varphi$. Thus $\varphi$ is not valid. 
\end{proof}

\begin{theorem}[Completeness of $\mathbf{T}_{\ML}$, $\mathbf{T}_{\ML(\idis)}$, $\mathbf{T}{\MDL}$, and $\mathbf{T}{\EMDL}$]
Let $\mathcal{L}\in\{\ML,\ML(\idis),\MDL, \EMDL \}$, the calculus $\mathbf{T}_{\mathcal{L}}$ is complete. 
\end{theorem}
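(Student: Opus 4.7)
The plan is to prove completeness by contraposition, directly paralleling the argument given for the propositional calculi $\mathbf{T}_{\PL}$, $\mathbf{T}_{\PL(\idis)}$, $\mathbf{T}_{\PD}$. Fix $\mathcal{L} \in \{\ML, \ML(\idis), \MDL, \EMDL\}$ and assume $\not\vdash_{\mathbf{T}_{\mathcal{L}}} \varphi$; the goal is to produce a Kripke model and team that refute $\varphi$, thereby showing $\not\models_{\mathcal{L}} \varphi$.

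First I would argue that some saturated tableau for $\varphi$ exists. Starting from the root $\{1,\dots,2^{\vrank{\varphi}}\}:\varphi$, one may keep applying rules (respecting the restriction that $(\Box)$ is applied at most once to the same labeled formula in any branch, and that a labeled formula is never reintroduced on a branch where it already occurs). By the termination result Theorem \ref{terminationml}, every such tableau construction must halt, yielding a saturated tableau $\mathcal{T}$ for $\varphi$.

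Next, since $\not\vdash_{\mathbf{T}_{\mathcal{L}}} \varphi$, by definition no closed tableau for $\varphi$ exists; in particular the saturated tableau $\mathcal{T}$ constructed above is not closed. Hence at least one branch $\mathcal{B}$ of $\mathcal{T}$ is open. Because $\mathcal{T}$ is saturated, $\mathcal{B}$ is a saturated open branch for $\varphi$.

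Finally, apply Lemma \ref{branchtomodelml}: the existence of an open saturated branch for $\varphi$ in $\mathbf{T}_{\mathcal{L}}$ implies $\not\models_{\mathcal{L}} \varphi$. This yields the contrapositive of completeness. No step here is expected to present a real obstacle, as all the heavy lifting (termination; extraction of a falsifying Kripke model from an open saturated branch via the induced model $\mathrm{K}_{\mathcal{B}}$) has already been carried out in Theorem \ref{terminationml} and Lemma \ref{branchtomodelml}; the only thing to check carefully is that saturation genuinely produces the right behaviour on the modal and dependence rules (in particular, that $(\Box)$'s one-application-per-formula restriction is compatible with saturation as used in the construction of $\mathrm{K}_{\mathcal{B}}$), but this is essentially guaranteed by how saturation was defined in Section \ref{tableauforml}.
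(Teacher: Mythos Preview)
Your proposal is correct and follows essentially the same route as the paper: assume $\not\vdash_{\mathbf{T}_{\mathcal{L}}}\varphi$, use Theorem~\ref{terminationml} to obtain a saturated tableau which must be open, extract a saturated open branch, and invoke Lemma~\ref{branchtomodelml}. The paper's proof is slightly terser but structurally identical; your extra remarks about the $(\Box)$ restriction and saturation are sound but not needed for the argument.
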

\begin{proof}
Fix $\mathcal{L}\in\{\ML,\ML(\idis),\MDL, \EMDL \}$. 
Assume that $\not\vdash_{T_{\mathcal{L}}} \varphi$. Thus every tableau for $\varphi$ is open. From Theorem \ref{terminationml} it follows that there exists a saturated open tableau for $\varphi$. Thus there exists a saturated open branch for $\varphi$. Thus, by Lemma \ref{branchtomodelml}, $\not\models_{\mathcal{L}}\varphi$.
\end{proof}

\section{Conclusion}
We gave sound and complete Hilbert-style axiomatizations for $\PL$, $\PL(\idis)$, $\PD$, $\ML(\idis)$, $\MDL$, and $\EMDL$. In addition, we presented novel labeled tableau calculi for these logics. We proved soundness, completeness and termination for each of the calculi presented.

\bibliographystyle{splncs03}

\label{LastPage}
\end{document}